\tikzset{label/.style={font=\scriptsize}}
\tikzset{widthone/.style={draw, minimum width=0.6cm, fill=white, minimum height=16pt, inner sep=-10pt}}
\tikzset{widthtwo/.style={draw, minimum width=1.6cm, fill=white, minimum height=16pt, inner sep=-10pt}}
\tikzset{widththree/.style={draw, minimum width=2.2cm, fill=white}}
\tikzset{label/.style={font=\scriptsize}}
\theoremstyle{definition}
\newtheorem{definition}{Definition}[section]
\newtheorem{example}[definition]{Example}
\theoremstyle{satz}
\newtheorem{satz}[definition]{Proposition}
\theoremstyle{theorem}
\newtheorem{theorem}[definition]{Theorem}
\def\FHilb{\ensuremath{\mathbf{FHilb}}\xspace}
\def\QDesign{\ensuremath{\mathbf{QDesign}}\xspace}
\theoremstyle{corollary}
\theoremstyle{lemma}
\newtheorem{lemma}[definition]{Lemma}
\newcounter{remark}[section]
\tikzset{dot/.style={draw, fill, circle, inner sep=0pt, minimum width=5pt}}
\newcommand\ignore[1]{}
\newcommand\vc[1]{\begin{tabular}{ccc}#1\end{tabular}}
\def\C{\ensuremath{\mathcal{C}}}
\def\Tr{\ensuremath{\mathrm{Tr}}}
\def\N{\mathbb{N}}
\def\R{\mathbb{R}}
\def\I{\mathbb{I}}
\newcounter{commcounter}
\begin{document}

\title{\bf A Categorical Model for Classical\\and Quantum Block Designs}
\author{\small\vc{\vc{Paulina L. A. Goedicke\\Institute for Theoretical Physics\\University of Cologne\\\texttt{goedicke@thp.uni-koeln.de}} & \vc{Jamie Vicary\\Department of Computer Science\\University of Cambridge\\\texttt{jamie.vicary@cl.cam.ac.uk}}}
}
\date{} 

\maketitle

\begin{abstract}
Classical block designs are important combinatorial structures with a wide range of applications in Computer Science and Statistics. Here we give a new abstract description of block designs based on the arrow category construction. We show that models of this structure in the category of matrices and natural numbers recover the traditional classical combinatorial objects, while models in the category of completely positive maps yield a new definition of quantum designs. We show that this generalizes both a previous notion of quantum designs given by Zauner and the traditional description of block designs. Furthermore, we demonstrate that there exists a functor which relates every categorical block design to a quantum one.
\end{abstract}

\section{Introduction}\label{sec. intro}

Combinatorial design theory has a variety of applications in computer science and statistics. Its main focus lies on finite discrete structures, such as (hyper)graphs, finite projective and affine planes, block designs, orthogonal arrays and Latin squares that fulfil certain constraints on the arrangement of elements ~\cite{stinson}. With the development of quantum computing, it is natural to ask how combinatorial objects fit into quantum information theory. Many interesting intersections between these fields are already known; for example, it has been shown that the problem of constructing mutually orthogonal Latin squares of order $d$ is equivalent to constructing a 2-uniform state of $N$ qudits of $d$ levels having $d^{2}$ positive terms~\cite{PhysRevA.92.032316}.\par
Some combinatorial objects have an analogous quantum form, which can lead to rich insights. For instance, Vicary et al. have introduced the notion of \textit{quantum Latin squares} (QLS), a quantum analogue of Latin squares with which one can build a new construction scheme for unitary error bases~\cite{musto2016quantum}. Based on this, Goyeneche et al. introduced \textit{quantum orthogonal arrays} and showed that they are related to QLSs, in the same way that orthogonal arrays are related to Latin squares~\cite{Goyeneche_2018},~\cite{Goy}. Quantum Latin squares also play a role in the classification of other biunitary constructions such as Hadamard matrices~\cite{reutter2019biunitary}, quantum teleportation and error correction~\cite{musto2016quantum}, and also give a new construction scheme for mutually unbiased bases (MUBs) as shown by Musto~\cite{Musto_2017}. Only recently, {\.{Z}}yczkowski et al. have found a quantum solution to the famous Euler's problem of thirty six officers, i.e. classically no two orthogonal Latin squares of order six exist, by constructing two orthogonal QLS of order six~\cite{Rather_2022}.
Other examples of the connection between classical and quantum combinatorics include the work by Wocjan and Beth on mutually unbiased basis construction from orthogonal Latin squares~\cite{wocjan}, and Wooter's description of affine plane constructions of mutually unbiased bases~\cite{wooters}. However, a unified perspective on the relationship between classical and quantum combinatorics remains elusive.

A prominent example of combinatorial designs are \emph{balanced incomplete block designs} (BIBDs)~\cite{stinson}, combinatorial designs with special balance constraints on the arrangement of elements, and Zauner has extended this to give a more general notion of quantum design~\cite{Zauner}. In this paper we develop a category-theoretical model for both classical and quantum designs, using the language of arrow categories. We will start by reviewing classical design theory and Zauner's notion of quantum designs. We then develop a categorical framework based on arrow categories, that transfers the intrinsic properties of block designs to a pointed monoidal dagger category. Applied to the categories $\mathbf{Mat}(\mathbb{N})$ and $\mathbf{CP}[\mathbf{FHilb}]$, this framework yields a categorical model for both block designs and quantum designs. This not only leads to a more general description of both classical and quantum designs via completely positive maps, but also allows us to relate classical to quantum designs via a functor. Moreover, we will use these techniques to define a category of mutually-unbiased bases. 

 
\subsection{Structure of the paper}
This paper is organised as follows. In Section~\ref{sec. background} and Section~\ref{sec. category theory} we will summarise the central mathematical concepts that are used in the paper, covering elementary combinatorics, quantum designs as they were defined by Zauner, and elements from category theory. Moreover, we give the definition of a category of BIBDs, namely $\mathbf{Block}$ and a category of general designs, namely $\mathbf{Design}$. In Section~\ref{sec. design construction} we will develop the \emph{design construction} that describes the design structures discussed in Section \ref{sec. bdesigns} via an abstract operation on a pointed monoidal dagger category. By applying this construction to $\mathbf{Mat}(\mathbb{N})$, we then show in the beginning of Section~\ref{sec. results} that this recovers the category theoretical description of classical designs, namely $\mathbf{Design}[\mathbf{Mat}(\mathbb{N})]$, and we show that there exists a functor from the subcategory $\mathbf{BDesign}[\mathbf{Mat}(\mathbb{N})]$ to $\mathbf{Block}$. Furthermore, we define a category $\mathbf{QDesign}$ of quantum designs by applying the design construction to $\mathbf{CP[FHilb]}$, and show that it contains a subcategory where objects are uniform and regular quantum designs of degree 1. This also allows us to define a category of MUBs. Finally, we construct a functor between $\mathbf{BDesign}[\mathbf{Mat}(\mathbb{N}), \mathbf{FSet}]$ and $\mathbf{QDesign}$, yielding a relation between categorical block designs and quantum design in accordance with earlier results given by Zauner~\cite{Zauner}.

Throughout the paper we will only consider finite dimensional Hilbert spaces.
\subsection{Open questions}
It would be interesting to know if the more general quantum designs that we define have an application in quantum computing, and in particular if the trace-preserving examples are interesting to use as quantum channels. The same holds for the CP-maps representing classical designs, which can be interpreted in terms of statistical mechanics.\footnote{This is because they are morphisms of $\mathbf{CP_{c}}[\mathbf{FHilb}]$ (see Section~\ref{sec. category theory}).} Moreover, a category theoretical perspective might lead to new insights on the problem of finding the maximal number of MUBs in arbitrary dimension, a problem which has already been approached by Musto using categorical techniques~\cite{Musto_2017}. Especially in this context, it would be relevant to know how one can embed Latin squares into our model, which may also give new insights on how QLS relate to quantum designs. Finally, one could ask how one can embed mutually unbiased measurements (MUMs) into this framework.

\subsection{Acknowledgements.} 
The first author has been supported by Germany's Excellence Strategy Cluster of Excellence Matter and Light for Quantum Computing (ML4Q) EXC 2004/1 390534769. The second author acknowledges support from the Royal Society. We are grateful to the reviewers for their excellent and insightful comments which improved our paper considerably.

\section{Design Theory}\label{sec. background}
In this section we recall the definition of balanced incomplete block designs, following Stinson ~\cite{stinson} and Buekenhout~\cite{IGeo}. We then review Zauner's definition of quantum designs~\cite{Zauner}.

\def\P{\mathcal{P}}

\subsection{Classical designs}\label{sec. bdesigns}

We begin with the definition of designs.

\begin{definition}
A \emph{design} is given by a set  $V=\{1,..,v\}$ of \emph{points}, and a set $B = \{1, \ldots, b\}$ of \emph{blocks} and an \emph{incidence relation} $I$ between them. 
\end{definition}

\noindent
Designs can also be viewed as bipartite graphs on the partitioned set given by the disjoint union of the blocks and points. More concretely, we can represent a design as an \textit{incidence matrix}, a $v \times b$ matrix $\chi$ with $\chi_{i,j} = 1$ if and only if $(i,j) \in I$, and $\chi_{i,j} = 0$ otherwise. Throughout the paper we will mainly use the incidence matrix representations of designs. 

\begin{definition}
A design $\chi: b \longrightarrow v$ is called 
\begin{itemize}
    \item \textit{k-uniform}, if every block contains exactly $k$ points:
$$\sum_{i=1}^{v} \chi_{i,j} = k, \ \ \text{for all} \ \ j=1,..,b$$
    \item \textit{r-regular}, if every point appears in exactly $r$ blocks:
$$\sum_{j=1}^{b} \chi_{i,j} = r \ \ \text{for all} \ \ i=1,..,v$$
\end{itemize}
\end{definition}
\begin{definition}
    A $k$-uniform and $r$-regular designs is called \textit{$\lambda$-balanced}, if any two points are contained in exactly $\lambda$ blocks. We then have:
$$
\chi \cdot \chi^{T}= \lambda \Bigl(E_{v \times v} - \mathbb{I}_{v \times v} \Bigr) + r\mathbb{I}_{v \times v}
$$
Here $\chi^{T}$ is the transpose incidence matrix, $E_{v \times v}$ denotes the $v \times v$-matrix in which every entry is equal to 1, and $\mathbb{I}_{v \times v}$ denotes the $v \times v$ identity matrix.
\end{definition}
\noindent
 The last expression from above means that by multiplying $\chi$ with its transpose one obtains a matrix where every off-diagonal entry is equal to $\lambda$ and every diagonal entry is equal to $r$.
These properties combine to give the important notion of block design.
\begin{definition}\label{def. 2 design}
A \emph{block design}, or a  \emph{$(v,k,r,b,\lambda)$-design}, is a design $\chi: b \longrightarrow v$ which is $k$-uniform, $r$\-regular and $\lambda$-balanced.\footnote{What we define here is actually known as a balanced incomplete block design (BIBD) in the literature, whereas block designs define a more general concept. For the sake of simplicity we will call BIBD's block designs.}
\end{definition}
By simple counting arguments, one can easily derive the following equational properties~(\cite{stinson}, p. 4-5):
\begin{lemma}
\label{lem:blockproperties}
For a $(v,k,r,b,\lambda)$-design, the following equations hold:
\begin{align}
    b\cdot k &= r\cdot v \label{bd1}\\
    \lambda (v-1)&=r(k-1) \label{bd2}
\end{align}
\end{lemma}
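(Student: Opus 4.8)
The plan is to derive both equations by double-counting incidences, i.e.\ by counting suitable sets of flags (point--block pairs satisfying the incidence relation) in two different ways. Everything follows directly from the defining equations of $k$-uniformity, $r$-regularity, and $\lambda$-balance, so no sophisticated machinery is needed; the only care required is to set up each count so that both sides manifestly equal the same quantity.

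For equation~\eqref{bd1}, I would count the total number of incident point--block pairs, that is, the total number of ones in the incidence matrix $\chi$. Summing over all entries gives $\sum_{i=1}^{v}\sum_{j=1}^{b}\chi_{i,j}$. Summing first over points and using $k$-uniformity, each of the $b$ blocks contributes exactly $k$, giving $b\cdot k$. Summing first over blocks and using $r$-regularity, each of the $v$ points contributes exactly $r$, giving $r\cdot v$. Since both expressions compute the same total, $b\cdot k = r\cdot v$.

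For equation~\eqref{bd2}, I would fix a single point and count, in two ways, the incident pairs $(\text{second point},\ \text{block})$ where the block contains the fixed point and the second point ranges over the remaining $v-1$ points. Fixing a point $i$, it lies in exactly $r$ blocks by regularity; each such block contains $k-1$ further points by uniformity, giving $r(k-1)$ such pairs. On the other hand, each of the other $v-1$ points shares exactly $\lambda$ common blocks with $i$ by $\lambda$-balance, giving $\lambda(v-1)$. Equating the two counts yields $\lambda(v-1)=r(k-1)$. Alternatively, this drops out algebraically by summing a single row of the matrix identity $\chi\cdot\chi^{T}= \lambda\bigl(E_{v\times v}-\mathbb{I}_{v\times v}\bigr)+r\mathbb{I}_{v\times v}$ from the $\lambda$-balance definition.

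There is no real obstacle here: the result is elementary and the arguments are standard counting, as the excerpt itself notes by attributing it to simple counting arguments. The only point demanding mild attention is making the second count precise, ensuring the fixed point is excluded from the $v-1$ points being counted so that the $k-1$ and $v-1$ factors line up correctly. Since the statement is cited directly from Stinson, one could equally just invoke the reference, but the double-counting proof is short enough to include in full.
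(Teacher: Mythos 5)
Your proof is correct and is essentially the approach the paper itself endorses: the paper gives no explicit derivation, stating only that the lemma follows ``by simple counting arguments'' with a citation to Stinson, and your two double counts (summing the incidence matrix by rows versus columns, and counting flags through a fixed point) are exactly those standard arguments. Both counts are set up correctly, including the exclusion of the fixed point so that the factors $v-1$ and $k-1$ align, so nothing is missing.
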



\begin{definition}
\label{def. symmetric design}
A $(v,k,r,b,\lambda)$-design is \textit{symmetric} when $v=b$; that is, when there are as many points as blocks. Lemma~\ref{lem:blockproperties} then implies that $r=k$.
\end{definition}
\begin{example}[\cite{stinson}, p.27]
Consider a finite projective plane of order $d$. We then have $v=d^{2}+d+1$ points and $b=d^{2}+d+1$ lines such that there are $k=d+1$ points on each line and each point appears on $r=d+1$ lines. Moreover, every pair of lines intersect in exactly one point. Hence we have a symmetric block design with parameters  $v=b=d^{2}+d+1$, $r=k=d+1$ and $\lambda=1$. 
\end{example}
We now consider the appropriate notion of homomorphism of block design.
\begin{definition}
Consider two designs $\chi: b \longrightarrow v $ and $\chi': b' \longrightarrow v'$. A \textit{design homomorphism} $f:\chi \to \chi'$ is a pair of functions $f_v:v \to v'$ and $f_b:b \to b'$ such that the following diagram commutes:
\[\begin{tikzcd}
b\arrow{d}{\chi} \arrow{r}{f_b} & b' \arrow{d}{\chi'}\\
v \arrow{r}{f_v} & v'
\end{tikzcd}
\]

\end{definition}

\noindent
We can use this to obtain categories of designs and block designs, as follows.
\begin{definition}\label{def. category BDesign}
The category $\mathbf{Design}$ has designs as objects, and design homomorphisms as morphisms. The category $\mathbf{Block}$ is the full subcategory on the block designs.
\end{definition}

\noindent
This definition could alternatively be given in terms of points and blocks, but we define it in this abstract way to better relate the categorical machinery to follow. A related definition was given by D\"orfler and Waller  \cite{Drfler1980ACA} who explored categories of hypergraphs. Since hypergraphs generalise relations, designs are instances of hypergraphs. In their definition they use the power-set functor to assign to each edge (block) the set of vertices (points) it is incident with. For our notion of a block design we do not make explicit use of the power-set functor; however, this could be an interesting approach since the category of relations is the Kleisli category of the power-set functor, and we would like to consider it in future work.

\subsection{Quantum designs}\label{sec. qdesigns}

A notion of quantum design has been presented by Zauner in his PhD thesis~\cite{Zauner}. Here we recall that definition, adapting the terminology slightly for consistency.
\begin{definition}\label{def. quantum design}
A \emph{quantum $(v,b)$-design} is a set $\mathit{D}=\{ p_{1},...,p_{v} \}$ of complex orthogonal $b\times b$ projection matrices $p_{i}$ on a $b$-dimensional Hilbert space $\mathbb{C}^{b}$, i.e. $p_{i}=p_{i}^{\dagger}=p_{i}^{2}$ for all $i \in \{1,  \ldots ,v\}$.
\end{definition}

\noindent
As for classical designs above, we introduce certain properties for quantum designs.
\begin{definition}
A quantum $(v,b)$-design is called
\begin{itemize}
    \item \emph{$r$-regular} if there exists some $r \in \N$ with $\Tr(p_i) = r$ for all $i \in \{1, \ldots, v\}$;
    \item \emph{$k$-uniform} if there exists some $k \in \R$ with $\sum_{i=1}^v p_i = k \cdot \I_{b \times b}$.
\end{itemize}
\end{definition}

\noindent
\begin{definition}
Given a quantum $(v,b)$-design, its \textit{degree} is the cardinality of the set $\{\Tr(p_i p_j)| i,j \in \{ 1, \ldots, v\}, i \neq j\}$.
\end{definition}

\noindent
It follows that a quantum design has degree 1 just when there exists some $\lambda \in \mathbb R$ such that 
\begin{align}
    \Tr(p_{i}p_{j})=\lambda && \forall i,j=1,...,v && \text{with} && i \neq j.
\end{align}
We will call such a quantum design $\lambda$-\emph{balanced}. That $\lambda$ is real in this case follows from a simple argument: $\lambda = \Tr(p_i p_j) = \Tr((p_i p_j)^\dag)^* = \Tr(p_j^\dag p_i ^\dag)^* = \Tr(p_j p_i)^* = \Tr(p_i p_j)^* = \lambda ^*$.

The following lemma can then be established analogous to Lemma~\ref{lem:blockproperties} for classical designs.
\begin{lemma}\label{lem. properties of quantum designs}
For a $k$-uniform, $r$-regular and $\lambda$-balanced quantum design $D=\{ p_{1},...,p_{v} \}$ with $p_{i} \in \mathbb{C}^{b}$ the following equations hold:
    \begin{align}
        b \cdot k &= v \cdot r \label{eq. 1 qdesign}\\
        \lambda (v-1) &= r(k-1) \label{eq. 2 qdesign}
    \end{align}
\end{lemma}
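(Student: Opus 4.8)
The plan is to derive both identities purely from trace manipulations, mirroring the counting arguments used for classical designs in Lemma~\ref{lem:blockproperties}. The only input beyond linearity of the trace is the defining projection property $p_i^2 = p_i$, which lets us replace $\Tr(p_i^2)$ by $\Tr(p_i) = r$.

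First I would establish~\eqref{eq. 1 qdesign}. Taking the trace of the $k$-uniformity equation $\sum_{i=1}^v p_i = k\,\I_{b\times b}$ and using linearity gives $\sum_{i=1}^v \Tr(p_i) = k\,\Tr(\I_{b\times b})$. By $r$-regularity the left-hand side equals $v\cdot r$, while the right-hand side is $k\cdot b$, yielding $b\cdot k = v\cdot r$.

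For~\eqref{eq. 2 qdesign} I would fix an index $j$ and evaluate $\Tr\!\bigl(p_j \sum_{i=1}^v p_i\bigr)$ in two ways. Substituting the uniformity relation and using regularity gives $\Tr(p_j \cdot k\,\I_{b\times b}) = k\,\Tr(p_j) = k\,r$. Expanding the sum instead and separating the $i = j$ term from the rest gives
\[
\Tr(p_j^2) + \sum_{i \neq j} \Tr(p_j p_i).
\]
Here the projection property turns $\Tr(p_j^2)$ into $\Tr(p_j) = r$, and $\lambda$-balance makes each of the $v-1$ remaining terms equal to $\lambda$, so the expression is $r + (v-1)\lambda$. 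Equating the two evaluations gives $k\,r = r + (v-1)\lambda$, which rearranges to $\lambda(v-1) = r(k-1)$.

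I do not anticipate a genuine obstacle here: both equations follow once one notices that the uniformity relation can be ``tested'' against the identity (for the first equation) and against an individual projector $p_j$ (for the second). The one point requiring care is the bookkeeping in the second computation --- correctly isolating the diagonal term $\Tr(p_j^2)$ and counting exactly $v-1$ off-diagonal contributions --- since this is precisely where the projection and balance hypotheses enter.
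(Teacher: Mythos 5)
Your proof is correct. For the first identity you do exactly what the paper does: take the trace of the uniformity relation $\sum_{i=1}^v p_i = k\,\I_{b\times b}$ and use regularity to evaluate the left-hand side as $v\cdot r$. For the second identity, however, you take a genuinely different route. You test uniformity against a single fixed projector, computing $\Tr\bigl(p_j \sum_i p_i\bigr)$ two ways: via uniformity and regularity it equals $k\,r$, and by expanding and separating the diagonal term it equals $\Tr(p_j^2) + \sum_{i\neq j}\Tr(p_j p_i) = r + (v-1)\lambda$. The paper instead squares the full uniformity relation: it writes $b = \Tr(\I_{b\times b}^2) = \tfrac{1}{k^2}\sum_{i,j}\Tr(p_i p_j) = \tfrac{1}{k^2}\bigl(\lambda v(v-1) + v r\bigr)$, and then must invoke the already-established first identity (substituting $r = bk/v$) and divide by $k^2$ to reach $\lambda(v-1) = r(k-1)$. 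Your single-projector test buys two small advantages: the second identity is derived independently of the first, and you never divide by $k$, so no implicit assumption $k\neq 0$ is needed. Both arguments rest on the same ingredients --- linearity of the trace, $\Tr(p_i)=r$, $\Tr(p_i p_j)=\lambda$ for $i\neq j$, and idempotence to turn $\Tr(p_j^2)$ into $\Tr(p_j)$ --- so yours is best described as a cleaner reorganization of the same counting argument, analogous to the classical proof that counts blocks through a fixed point rather than all incident pairs at once.
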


\noindent
The proof for this lemma can be found in Appendix \ref{app: proofs}.

\begin{definition}\label{def. commutative quantum design} A quantum design is \emph{commutative} when all projection matrices pairwise commute.
\end{definition}
\begin{theorem}[\cite{Zauner}, Theorem 1.10] \label{thrm equivalence designs}
    A commutative quantum design is equivalent to a classical block design.\footnote{This holds because every commutative design is unitarily equivalent to a design comprised of diagonal matrices; as the projections are idempotent, the diagonal entries must therefore be 0 or 1 \cite{Zauner}.}
\end{theorem}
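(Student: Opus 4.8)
The plan is to invoke the spectral theorem for a commuting family of Hermitian matrices. Since a commutative quantum design $D = \{p_1, \ldots, p_v\}$ consists of self-adjoint projections that pairwise commute, there is a single unitary $U$ that simultaneously diagonalises all of them, so that each $U p_i U^\dagger$ is diagonal. Because each $p_i$ is idempotent its eigenvalues lie in $\{0,1\}$, so every matrix $U p_i U^\dagger$ has entries $0$ or $1$ on the diagonal. This simultaneous diagonalisation is the one genuinely structural step; everything afterwards is bookkeeping.

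Next I would extract a classical incidence matrix. Indexing the $v$ projections by points and the $b$ basis vectors by blocks, I set $\chi_{i,j} := (U p_i U^\dagger)_{j,j} \in \{0,1\}$, obtaining a $v \times b$ incidence matrix $\chi$. The task is then to check that the three defining properties of a commutative quantum design translate exactly into those of a classical block design. Using that the trace is invariant under unitary conjugation, $r$-regularity $\Tr(p_i) = r$ becomes $\sum_{j} \chi_{i,j} = r$, and $k$-uniformity $\sum_i p_i = k\,\I$ becomes $\sum_i \chi_{i,j} = k$ after diagonalising. For the balance condition, cyclicity and unitary invariance of the trace give $\Tr(p_i p_j) = \sum_\ell \chi_{i,\ell}\chi_{j,\ell} = (\chi \chi^{T})_{i,j}$, so $\lambda$-balance $\Tr(p_i p_j) = \lambda$ for $i \ne j$ is precisely the off-diagonal part of $\chi \chi^{T} = \lambda(E_{v\times v} - \I) + r\,\I$, while the diagonal entries $(\chi \chi^{T})_{i,i} = \sum_\ell \chi_{i,\ell} = r$ recover $r$-regularity since the entries are $0/1$. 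Hence $\chi$ is a $(v,k,r,b,\lambda)$-design.

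For the converse I would run the same computation backwards: given a classical block design $\chi$, I define $p_i := \mathrm{diag}(\chi_{i,1}, \ldots, \chi_{i,b})$. Each $p_i$ is a diagonal $0/1$ matrix, hence a self-adjoint projection, and any two such matrices commute, so $\{p_1, \ldots, p_v\}$ is a commutative quantum design; the same identities show its parameters agree with those of $\chi$. Since conjugating a commutative quantum design by $U$ leaves its parameters unchanged and produces exactly the diagonal design associated to $\chi$, this establishes the equivalence up to unitary conjugation.

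I expect the main obstacle to be conceptual rather than computational: pinning down the precise meaning of \emph{equivalent} (unitary equivalence of the projection families, together with equality of the parameters $v,k,r,b,\lambda$), and being careful that the point/block indexing matches the orientation of the classical incidence matrix $\chi : b \to v$. Once the simultaneous diagonalisation is in hand and the index conventions are fixed, the three property translations are routine.
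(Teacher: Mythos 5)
Your proposal is correct and follows essentially the same route as the paper, which does not prove the theorem itself but cites Zauner and sketches exactly this argument in a footnote: every commutative design is unitarily equivalent to one of diagonal matrices, and idempotence forces the diagonal entries to be $0$ or $1$, giving the classical incidence matrix. Your write-up simply fleshes out that sketch, adding the routine translation of regularity, uniformity and balance into the conditions on $\chi$ and the converse construction.
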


A prominent example of a uniform and regular quantum design with degree 2 are \emph{mutually unbiased bases} (MUBs). These objects play a significant role in quantum information theory.
\begin{example}[MUBs.]
Mutually unbiased bases are a pair of bases $\{ \ket{a_i}\}_{i=0,...,d-1},\{\ket{b_i}\}_{i=0,...,d-1}$ for a $d$-dimensional Hilbert space $H$, such that the inner products $\braket{a_i| b_j}$ are equal for all $i,j=0,...,d-1$. A uniform and regular quantum design of degree 2 with parameters $r=1$, $b=d$, $v=d \cdot k$ and $\Lambda = \{ \frac{1}{d}, 0 \}$ defines a set of $k$ MUB's in a $d$-dimensional Hilbert space $H$. To see that, note that the $d\cdot k$ projectors all have trace one and satisfy the following condition, where $a$ labels the different orthogonal classes, and $i$ labels the projectors within an orthogonal class:
\begin{align}
    \sum_{a=1}^{k} \sum_{i=1}^{d} p_{i}^{a} = k \cdot \mathbb{I}
\end{align}
Moreover, the following holds:
\begin{align*}
    \mathrm{tr}(p^{a}_{i}p^{b}_{j})= \frac{1}{d}(1- \delta_{ab}) + \delta_{ij}\delta_{ab}
\end{align*}
\end{example}
It is easy to see that we get a \emph{complete} set of MUBs if $v$ equals $d \, (d+1)$, as we then have $k=d+1$.\footnote{Complete means that we have $d+1$ MUBs in a $d$-dimensional Hilbert space.} 

\section{Category Theory}\label{sec. category theory}
In this section we give the definition of an arrow category, and explain the CP-construction. We will assume familiarity with basic concepts in category theory and the graphical calculus, and refer to Maclane~\cite{maclane} and Heunen and Vicary~\cite{jamie} for further background.

The categories we will mostly use in this paper are the category of matrices and natural numbers, \textbf{Mat}($\mathbb{N}$), and the category of finite dimensional Hilbert spaces and bounded linear maps, $\mathbf{FHilb}$. 
\begin{example} \label{ex. Rel cat and Hilb cat}
\begin{itemize}
    \item [(i)](\cite{jamie}, p. 16) The category $\mathbf{FHilb}$ has as objects finite dimensional Hilbert spaces and as morphisms bounded linear maps between Hilbert spaces. Composition is the composition of linear maps as ordinary functions and the identity morphisms are given by identity linear maps. The monoidal product is given by the tensor product on Hilbert spaces and the unit object is the one-dimensional Hilbert space $\mathbb{C}$. 
    \item[(ii)] (\cite{IntroCat}, p. 4) The category \textbf{Mat}($\mathbb{N}$) of matrices over $\mathbb{N}$ has objects given by natural numbers. For $m,n \in \mathbb{N}$, the Hom-set $\mathrm{Hom}_{\mathbf{Mat}(\mathbb{N})}(m,n)$ is the set of all $n\times m$-matrices over $\mathbb{N}$, composition being matrix multiplication. The monoidal product on objects is given by the multiplication of numbers and on morphisms by the Kronecker product of matrices. The monoidal unit is the natural number~$1$.
\end{itemize}
\end{example}

\begin{definition}[See \cite{IntroCat}, p. 23-24]\label{def. arrow category}
For a category \C, its \textit{arrow category} $\mathbf{Arr}[\C]$ is defined as follows:
\begin{itemize}
    \item objects are triples $(A, B, h)$ with $h: A\to B$ in \C;
    \item morphisms $\phi: (A, B, h) \to (A', B', h')$ are pairs of morphisms $\phi_{A}: A \to A'$ and $\phi_{B}: B \to B'$ in $\mathcal{C}$ such that the following diagram commutes:
    \[\begin{tikzcd}
    A \arrow[swap]{d}{h} \arrow{r}{\phi_A} &     \arrow{d}{h'} A' \\
    B \arrow{r}{\phi_B} & B'
\end{tikzcd}
\]
\end{itemize}
\end{definition}

\subsection{The CP construction}
The concept of completely positive maps is well-established~\cite{murphystaralgebras}. Here we use Selinger's categorical description of completely positive maps~\cite{selinger}, as follows, exploiting the notion of dagger Frobenius structure, which is standard in the categorical quantum mechanics literature~\cite{jamie}.
\begin{definition} \label{def. CP condition}
In a monoidal dagger category,  let $(A, \mu_{A}, \eta_{A})$ and $(B, \mu_{B}, \eta_{B})$ be dagger Frobenius structures.\footnote{A dagger Frobenius structure is a monoid structure which, together with its dagger, satisfies a Frobenius condition. For an introduction to dagger Frobenius structures we refer to Heunen and Vicary~\cite{jamie}.} A morphism $f: A \to B$ satisfies the \emph{CP-condition} if there exists some object $X$ and some morphism $g: A \otimes B \to X$ such that the following equation holds:
$$
\begin{aligned}
\begin{tikzpicture}[scale=1, thick, yscale=.8]
\draw (1,-.75) to +(0,1);
\draw (0.5,0) to +(0,2.75);
\draw (1.5,0) to +(0,1.75);
\draw (2,2) to +(0,.75);
\draw (2.5, -.75) to +(0, 3);
\node [widthtwo] at (1,0) {$\Delta$};
\node [widthtwo] at (2,2) {$\mu$};
\node [widthone] at (1.5,1) {$f$};
\end{tikzpicture}
\end{aligned}
\quad=\quad
\begin{aligned}
\begin{tikzpicture}[scale=1, thick]
\draw (6,0) to +(0,1);
\draw (7,0) to +(0,1);
\draw (6.5,1) to +(0,1);
\draw (6,2) to +(0,1);
\draw (7,2) to +(0,1);
\node [widthtwo] at (6.5, 1) {$g$};
\node [widthtwo] at (6.5,2) {$g^\dagger$};
\end{tikzpicture}
\end{aligned}
$$ 
\end{definition}

\noindent
One can show that, in a symmetric monoidal dagger category, a morphism that satisfies this condition constitutes a CP-map \cite{jamie}.
\begin{example}\label{ex. POVM CPM}%
In \FHilb, consider a POVM consisting of $b$ projections $p_{i}: H \to H$. One can define a completely positive map $\varphi: \mathbb{C}^{b} \to H \otimes H^{*}$ that sends the computational basis vector $\ket{i}$ to $p_{i}$. Graphically, we can represent $\varphi$ as follows, where $b_{H}: \mathbb{C} \to H^{*} \otimes H$ is the evaluation map:
$$
\sum_{i=1}^{b}
\quad
\begin{aligned}
\begin{tikzpicture}[scale=1, thick, yscale=.8]
\draw (2,1) to +(0,1);
\draw (2,3) to (2,4.75) node [above, label] {$H$} ;
\draw (3,3)  to (3,4.75) node [above, label] {$H^{*}$};
\node [widthone] at (3,4) {$p_{i}$};
\node [widthtwo] at (2.5,3) {$b_{H}^\dagger$};
\node [widthone] at (2,2) {$i$};
\end{tikzpicture}
\end{aligned}
$$
\end{example}

\begin{satz} \label{prop. CP construction}
Let $(\mathcal{C}, \otimes_{\mathcal{C}}, \mathbb{I}_{\mathcal{C}})$ be a monoidal dagger category. There is a category $\mathbf{CP}[\mathcal{C}]$ in which
\begin{itemize}
    \item  objects are special symmetric dagger Frobenius structures in $\mathcal{C}$
    \item morphisms are morphisms of $\mathcal{C}$ that satisfy the CP-condition.
\end{itemize}
\end{satz}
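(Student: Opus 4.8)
The plan is to verify that $\mathbf{CP}[\mathcal{C}]$ as specified satisfies the axioms of a category: namely, that composition is well-defined (the composite of two CP-morphisms is again a CP-morphism), that composition is associative, and that each object carries an identity morphism satisfying the unit laws. Associativity and the categorical identity axioms will be inherited for free from $\mathcal{C}$, since composition of morphisms in $\mathbf{CP}[\mathcal{C}]$ is just composition in $\mathcal{C}$; the only genuine content is showing that the class of CP-morphisms is closed under composition and contains the identities, so that these inherited operations actually land back inside $\mathbf{CP}[\mathcal{C}]$.

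First I would fix notation, taking objects to be special symmetric dagger Frobenius structures $(A,\mu_A,\eta_A)$, and recalling from Definition~\ref{def. CP condition} that a morphism $f\colon A\to B$ is CP precisely when there is an object $X$ and a morphism $g\colon A\otimes B\to X$ making the defining graphical equation hold. I would carry out the closure-under-composition argument diagrammatically, in the style of the graphical calculus referenced in the paper. Given CP-morphisms $f\colon A\to B$ with witness $g_f\colon A\otimes B\to X$ and $h\colon B\to C$ with witness $g_h\colon B\otimes C\to Y$, the key step is to construct an explicit witness $g_{hf}$ for the composite $h\circ f\colon A\to C$. The idea is to plug the CP-presentation of $f$ into that of $h$, using a copy-and-compare pattern: one comultiplies on $B$ via $\Delta_B=\mu_B^\dagger$, feeds one leg through the $f$-witness and one through the $h$-witness, and uses the \emph{special} and \emph{Frobenius} axioms of the dagger Frobenius structure on $B$ to collapse the resulting string diagram into the required $g_{hf}\circ (g_{hf})^\dagger$ form. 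The symmetry of the Frobenius structure is what lets me reorder the wires on $B$ so that the two witnesses compose cleanly.

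For the identity morphisms, I would exhibit, for each object $(A,\mu_A,\eta_A)$, that $\mathrm{id}_A$ satisfies the CP-condition, with witness built directly from the comultiplication $\Delta_A=\mu_A^\dagger$ itself; the defining equation then reduces, again via the special and Frobenius laws, to the Frobenius bialgebra identity that $\mathrm{id}_A$ trivially realises. Finally I would note that associativity and the left/right unit laws hold automatically because composition and identities are those of $\mathcal{C}$, and $\mathcal{C}$ is already a category.

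The main obstacle is the closure-under-composition step: constructing the witness $g_{hf}$ and then verifying, by a string-diagram manipulation, that the composite genuinely factors in the $g\circ g^\dagger$ form demanded by the CP-condition. This is where the interplay of the \emph{special} condition ($\mu_B\circ\mu_B^\dagger=\mathrm{id}_B$), the \emph{Frobenius} law, and the \emph{symmetry} of the monoidal dagger category must be combined carefully; in the concrete case of $\mathbf{FHilb}$ this recovers the standard fact that completely positive maps compose, but here it must be established purely abstractly from the axioms, so keeping track of daggers and wire orientations is the delicate part. I would expect the cleanest route to cite or mirror Selinger's and Heunen--Vicary's established treatment of the CP-construction \cite{selinger,jamie}, since the result is essentially their theorem recast in the present notation.
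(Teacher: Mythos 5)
Your proposal is correct and is in effect the same approach as the paper's, because the paper gives no proof of this proposition at all: it is stated as the known CP-construction and deferred to the cited literature (Selinger, Heunen--Vicary), and your outline --- identities witnessed by $\mu_A$ via the Frobenius law, composites witnessed by feeding the two legs of the cup $\Delta_B\circ\eta_B$ on the middle object into the two given witnesses and collapsing via specialness and the Frobenius law, with associativity and unit laws inherited from $\mathcal{C}$ --- is precisely the standard argument found in those references, which you yourself propose to mirror. Two cosmetic slips that do not affect the substance: the required factorisation has the form $g^\dagger\circ g$ (not $g\circ g^\dagger$, which would not even typecheck as an endomorphism of $A\otimes C$), and the wire reordering uses the symmetry of the ambient monoidal category rather than the symmetry of the Frobenius structure, while the identity case needs only the Frobenius law, not specialness.
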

\begin{example}  \label{ex. FHilb}
In $\mathbf{CP}[\mathbf{FHilb}]$ objects are finite dimensional $H^{*}$-algebras, i. e. an algebra A that is also a Hilbert space with an anti-linear involution $\dagger: A \to A$ satisfying $\langle ab| c\rangle = \langle b| a^{\dagger} c\rangle = \langle a| c b^{\dagger} \rangle$, and morphisms are completely positive maps. 
\end{example}
\begin{satz}
   The category $\mathbf{CP_{c}}[\mathcal{C}]$ with \emph{classical structures}, i.e. special commutative dagger Frobenius structures in $\mathcal{C}$, as objects and completely positive maps between these structures as morphisms, is a subcategory of $\mathbf{CP}[\mathcal{C}]$. 
\end{satz}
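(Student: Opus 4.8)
The plan is to verify the three requirements for $\mathbf{CP_{c}}[\mathcal{C}]$ to be a subcategory of $\mathbf{CP}[\mathcal{C}]$: that every object of the former is an object of the latter, that every morphism of the former is a morphism of the latter, and that the inclusion respects identities and composition. Since the morphisms of both categories are by definition the morphisms of $\mathcal{C}$ satisfying the CP-condition of Definition~\ref{def. CP condition}, and since composition and identities in both categories are simply those inherited from $\mathcal{C}$, the only genuine content lies in the objects.

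The key step is therefore to show that every special commutative dagger Frobenius structure is a special symmetric dagger Frobenius structure, so that the classical structures form a subclass of the objects of $\mathbf{CP}[\mathcal{C}]$. Both classes share the specialness, the monoid axioms, and the dagger Frobenius condition by definition, so all that remains is the implication that commutativity entails symmetry. Writing $\sigma$ for the symmetry (swap) and $\eta^{\dagger}$ for the counit, the symmetry condition asserts that the Frobenius form $\eta^{\dagger}\circ\mu$ is invariant under $\sigma$, i.e. $\eta^{\dagger}\circ\mu\circ\sigma = \eta^{\dagger}\circ\mu$, whereas commutativity is the stronger statement $\mu\circ\sigma = \mu$. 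Post-composing the commutativity equation with the counit $\eta^{\dagger}$ yields the symmetry equation directly, so commutativity implies symmetry and the object inclusion holds. Graphically this is immediate: one simply caps off the commutativity equation with the counit.

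With the objects settled, the morphism inclusion is immediate, since a morphism of $\mathbf{CP_{c}}[\mathcal{C}]$ is by definition a morphism of $\mathcal{C}$ between classical structures satisfying the CP-condition, hence a morphism of $\mathbf{CP}[\mathcal{C}]$. For the subcategory axioms, the identity morphism $\mathrm{id}_A$ on any classical structure $A$ satisfies the CP-condition and so lies in $\mathbf{CP}[\mathcal{C}]$, and the composite of two CP-maps between classical structures is again a CP-map, since $\mathbf{CP}[\mathcal{C}]$ is already known to be a category by Proposition~\ref{prop. CP construction}; the composite is automatically a map between classical structures. Thus $\mathbf{CP_{c}}[\mathcal{C}]$ is closed under composition and contains the relevant identities, completing the verification.

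I do not expect any real obstacle here: the entire argument reduces to the one-line observation that capping the commutativity law with the counit produces the symmetry law, together with the fact that $\mathbf{CP}[\mathcal{C}]$ is already a category. The only point requiring mild care is to confirm that the composition and identity structure of $\mathbf{CP_{c}}[\mathcal{C}]$ genuinely coincides with that of $\mathbf{CP}[\mathcal{C}]$ rather than being an independently defined structure, which holds because both are inherited verbatim from $\mathcal{C}$.
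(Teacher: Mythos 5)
Your proposal is correct, and it supplies exactly the justification the paper leaves implicit: the paper states this proposition without any proof, treating it as a standard fact. The crux is precisely the step you identify---that commutativity ($\mu\circ\sigma=\mu$) implies symmetry ($\eta^{\dagger}\circ\mu\circ\sigma=\eta^{\dagger}\circ\mu$) by composing with the counit---after which the inclusion of objects, the fullness on CP-morphisms, and the inheritance of identities and composition from $\mathbf{CP}[\mathcal{C}]$ (Proposition~\ref{prop. CP construction}) are immediate, just as you argue.
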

\begin{satz}[\cite{jamie}, p. 241]\label{prop. equivalence CPc FHilb and MatN}
The category $\mathbf{CP_{c}}[\mathbf{FHilb}]$ is monoidally equivalent to $\mathbf{Mat}(\mathbb{N})$. 
\end{satz}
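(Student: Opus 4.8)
The plan is to exhibit an explicit strong monoidal functor $F \colon \mathbf{Mat}(\N) \to \mathbf{CP_{c}}[\FHilb]$ and to prove it is an equivalence by checking essential surjectivity and full faithfulness, and then that it respects the monoidal structure up to coherent isomorphism. On objects, $F$ sends a natural number $n$ to the Hilbert space $\mathbb{C}^{n}$ equipped with its canonical classical structure, the special commutative dagger Frobenius structure $(\Delta,\varepsilon)$ that copies and deletes the standard basis, $\Delta\colon \ket{i}\mapsto \ket{i}\otimes\ket{i}$ and $\varepsilon\colon \ket{i}\mapsto 1$, with multiplication $\mu=\Delta^{\dagger}$. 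On morphisms, $F$ sends an $n\times m$ matrix $M$ over $\N$ to the linear map $\mathbb{C}^{m}\to\mathbb{C}^{n}$ with matrix $M$; the first task is to verify this lands in $\mathbf{CP_{c}}[\FHilb]$, i.e. that such a map satisfies the CP-condition of Definition~\ref{def. CP condition}.

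For essential surjectivity I would invoke the classification of classical structures in \FHilb: every special commutative dagger Frobenius structure arises, up to isomorphism, as the copying structure of a unique orthonormal basis (see \cite{jamie}). Hence an object of $\mathbf{CP_{c}}[\FHilb]$ is determined up to isomorphism by the dimension $n$ of its underlying space, so it is isomorphic to $F(n)$; this also fixes the object map of $F$ as a bijection on isomorphism classes.

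The heart of the argument is full faithfulness, which amounts to identifying the CP-morphisms between two classical structures. Here I would unfold the left-hand side of the CP-condition using the copying and multiplication laws of the classical structures on $\mathbb{C}^{m}$ and $\mathbb{C}^{n}$. Writing $f$ in the distinguished bases as $f\ket{i}=\sum_{j}f_{ji}\ket{j}$, a direct computation with the spider rules shows that the doubled map acts diagonally on the product basis,
\[
(\mathrm{id}\otimes\mu)\circ(\mathrm{id}\otimes f\otimes\mathrm{id})\circ(\Delta\otimes\mathrm{id})\colon \ket{i}\otimes\ket{k}\;\longmapsto\; f_{ki}\,\ket{i}\otimes\ket{k}.
\]
This operator equals $g^{\dagger}g$ for some $g$ precisely when it is positive semidefinite, that is, exactly when every entry $f_{ki}$ is non-negative; conversely, when the entries are non-negative one may take $g$ to be its positive square root. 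Thus the CP-morphisms $F(m)\to F(n)$ are exactly the maps with non-negative entries, matching $\mathrm{Hom}_{\mathbf{Mat}(\N)}(m,n)$, so $F$ is fully faithful. Functoriality then follows because the diagonal form above makes composition of these CP-maps coincide with matrix multiplication, and $\N$ is closed under the requisite sums and products.

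Finally, monoidality is checked by comparing the two structures: the canonical classical structure on $\mathbb{C}^{m}\otimes\mathbb{C}^{n}\cong\mathbb{C}^{mn}$ is the tensor product of the two copying structures, the monoidal unit $1$ maps to $\mathbb{C}$, and the Kronecker product of matrices corresponds to the tensor product of the associated CP-maps, with the coherence isomorphisms inherited from \FHilb. I expect the main obstacle to be the morphism-characterization step: one must carefully manage the graphical rewriting of the CP-condition and, crucially, verify that the positivity constraint cuts the hom-sets down to exactly the matrices admitted by $\mathbf{Mat}(\N)$ rather than a larger class, so that $F$ is full and faithful on the nose. The remaining essential-surjectivity and monoidal-coherence checks are routine given the classification of classical structures.
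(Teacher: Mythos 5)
The paper offers no proof of this proposition at all---it is imported by citation from Heunen and Vicary---so the only meaningful comparison is with the standard textbook argument, which is essentially the one you propose: classify the classical structures in \FHilb as the copying structures of orthonormal bases, then characterise the morphisms satisfying the CP-condition between two such structures. Your spider computation is correct: for $f\colon\mathbb{C}^{m}\to\mathbb{C}^{n}$ with matrix $(f_{ki})$, the canonical map in the CP-condition acts by $\ket{i}\otimes\ket{k}\mapsto f_{ki}\,\ket{i}\otimes\ket{k}$, and such a diagonal operator is of the form $g^{\dagger}g$ precisely when every $f_{ki}$ is a non-negative \emph{real} number.

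That last word is where your proof has a genuine gap, and it is exactly the step you yourself flagged as crucial. Entrywise non-negativity does not cut the hom-sets down to $\mathrm{Hom}_{\mathbf{Mat}(\mathbb{N})}(m,n)$: it yields \emph{all} matrices with non-negative real entries, a strictly larger class than matrices over $\mathbb{N}$. For instance, the scalar $\tfrac12\colon\mathbb{C}\to\mathbb{C}$ satisfies the CP-condition (take $g=\tfrac{1}{\sqrt2}$) yet lies outside the image of your functor $F$, so $F$ is faithful but not full, and the full-faithfulness claim collapses. Moreover, the gap cannot be repaired by a cleverer argument: what your computation actually establishes is a monoidal equivalence $\mathbf{CP_{c}}[\FHilb]\simeq\mathbf{Mat}(\mathbb{R}_{\geq 0})$, the category of matrices with non-negative real entries, and this category is not equivalent to $\mathbf{Mat}(\mathbb{N})$ as defined in this paper---any equivalence would induce a bijection $\mathrm{End}(1)\cong\mathrm{End}(F(1))$, but the former is the countable set $\mathbb{N}$ while the latter is uncountable, since it contains all non-negative real multiples of the identity. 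So the proposition is provable by your method only if ``$\mathbf{Mat}(\mathbb{N})$'' is read as matrices over the non-negative reals (which is the form of the result in the cited source); as literally stated, with the paper's own Definition of $\mathbf{Mat}(\mathbb{N})$, your proof does not go through, and no proof can.
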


\noindent
An interpretation of these constructions is the following: $\mathcal{C}$ models pure state quantum mechanics, $\mathbf{CP}[\mathcal{C}]$ models mixed state quantum mechanics, while  $\mathbf{CP_{c}}[\mathcal{C}]$ describes statistical mechanics~\cite{jamie}. 

The final piece of structure we require is that of pointed monoidal category.
\begin{definition}
A \emph{pointed} monoidal category is a monoidal category for which every object $A$ is equipped with a canonical morphism $p_{A}: \mathbb{I} \to A$.
\end{definition}

\begin{example} We obtain examples as follows from the categories we have been considering:\label{ex. pointed structures}
\begin{itemize}
    \item [(i)] The category $\mathbf{CP}[\mathcal{C}]$ has a pointed structure given by the adjoint of the trace map $V \otimes V^{*} \to \mathbb{I}$.
    \item[(ii)]  In \textbf{Mat}($\mathbb{N}$) a pointed structure is given by a column matrix with a $1$ at every entry: $p_{n}: 1 \to n$.
\end{itemize}
\end{example}
\section{Categorical Block Designs}\label{sec. design construction}
In this section we will develop a construction that gives an abstract notion of the uniformity-, regularity- and $\lambda$-balanced condition from Section \ref{sec. bdesigns} in an arbitrary rigid monoidal category. We will call this the \emph{design construction}.
\begin{definition}[Design construction] \label{def. Design construction}
    Let $F:\mathcal{D} \hookrightarrow \mathcal C$ be a faithful monoidal functor between pointed monoidal dagger categories. The category $\mathbf{Design}[\mathcal{C}, \mathcal D]$ is the subcategory of $\mathbf{Arr}[\mathcal{C}]$ where the morphisms are given by pairs of morphisms of $\mathcal{C}$ which are in the image of the functor $F$; we omit $F$\ from the notation, ensuring it is clear from the context. Where $F=\mathrm{id}$, we simply write $\mathbf{Design}[\mathcal C]$. 
\end{definition}
\begin{definition}\label{def. RUDesign construction}
The category $\mathbf{RUDesign}[\mathcal{C}, \mathcal D]$ is the subcategory of  $\mathbf{Design}[\mathcal{C}, \mathcal D]$ where objects \mbox{$f: A \to D$} are $r$-regular and $k$-uniform, for scalars $r$, $k$ $\in$ $\mathrm{Hom}(\mathbb{I}_{\mathcal{C}},\mathbb{I}_{\mathcal{C}} )$, with the pointed structure and its dagger represented by a black dot:
\begin{align*}
    \begin{aligned}
    \begin{tikzpicture}[scale=.6, thick, yscale=.8]
    \draw (1,0) to (1,3) node [above, label] {$D$};
    \node [widthone] at (1,1.5) {$f$};
    \node [dot] at (1,0) {};
    \end{tikzpicture}
    \end{aligned}
    \quad&=\quad r\,\,\,
    \begin{aligned}
        \begin{tikzpicture}[scale=.6, thick, yscale=.8]
    \draw (0,0) to (0,3) node [above, label] {$D$} ;
    \node [dot] at (0,0){};
    \end{tikzpicture}
    \end{aligned}
    &
    \begin{aligned}
    \begin{tikzpicture}[scale=.6, thick, yscale=.8]
    \draw (0,0) node [below, label] {$A$} to (0,3);
    \node [widthone] at (0,1.5) {$f$};
    \node [dot] at (0,3){};
    \end{tikzpicture}
    \end{aligned}
    \quad&=\quad k\,\,\,
    \begin{aligned}
        \begin{tikzpicture}[scale=.6, thick, yscale=.8]
    \draw (0,0) node [below, label] {$A$}  to (0,3);
    \node [dot] at (0,3){};
    \end{tikzpicture}
    \end{aligned}
\end{align*}
\end{definition}
\begin{lemma}\label{lemma: bd1 in Design(C)}
In $\mathbf{RUDesign}[\mathcal{C}, \mathcal D]$ for any $k$-uniform, $r$-regular object $f: A \to D$, the following equations hold:
\begin{align}
    k \cdot \mathrm{dim}(D) = r \cdot \mathrm{dim}(A) \label{eq. bd1 cat}
\end{align}
where $\mathrm{dim}(A)= p_{A}^{\dagger}\circ p_{A}$ for $A$ $\in$ $\mathrm{obj}(\mathcal{C})$. Here $p_{A}: 1 \to A$ is the pointed structure of $\C$.
\end{lemma}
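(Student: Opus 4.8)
The plan is to reduce the categorical identity to a single endomorphism of the monoidal unit which can be evaluated in two different ways, using the two defining equations of Definition~\ref{def. RUDesign construction}. Concretely, I would cap the morphism $f\colon A \to D$ with the pointed structure of the domain on the lower leg and the dagger of the pointed structure of the codomain on the upper leg, forming the scalar $p_D^\dagger \circ f \circ p_A \colon \mathbb{I} \to \mathbb{I}$ (diagrammatically, the box $f$ with a black dot below and a black dot above). This is the abstract analogue of summing every entry of an incidence matrix, which classically computes the total number of incidences and is exactly what powers the counting proof of Lemma~\ref{lem:blockproperties}; so the expectation is that it will here produce equation~(\ref{eq. bd1 cat}) directly.

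First I would rewrite this scalar using the $r$-regularity equation $f\circ p_A = r\,p_D$, and then, independently, using the $k$-uniformity equation $p_D^\dagger\circ f = k\,p_A^\dagger$. Each substitution collapses one half of the diagram, leaving behind one of the scalars $\mathrm{dim}(A)=p_A^\dagger\circ p_A$ or $\mathrm{dim}(D)=p_D^\dagger\circ p_D$ multiplied by the appropriate scalar $r$ or $k$. Since both expressions are reductions of one and the same endomorphism of $\mathbb{I}$, they must coincide, and pairing $r$ with $\mathrm{dim}(A)$ and $k$ with $\mathrm{dim}(D)$ yields precisely~(\ref{eq. bd1 cat}).

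I expect the main obstacle to be the careful bookkeeping of the pointed structure and of the two possible orientations of the resulting identity: the lower leg must use the pointed morphism $p_A$ of the \emph{domain} while the upper leg uses the dagger $p_D^\dagger$ of the \emph{codomain}'s pointed morphism, and one has to track which scalar becomes attached to which dimension so that the equation emerges in the orientation recorded in~(\ref{eq. bd1 cat}) rather than its transpose. The remaining steps are routine: scalars in a monoidal category are central and may be extracted from the diagram, and the faithfulness of $F$ in Definition~\ref{def. RUDesign construction} guarantees that $r$ and $k$ genuinely lie in $\mathrm{Hom}(\mathbb{I}_{\mathcal{C}},\mathbb{I}_{\mathcal{C}})$ and that both defining equations, together with $p_A^\dagger\circ p_A$ and $p_D^\dagger\circ p_D$, are honest morphisms of $\mathcal{C}$.
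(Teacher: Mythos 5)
Your strategy is exactly the paper's: cap $f$ with $p_A$ on the bottom and $p_D^\dagger$ on the top, and double-count the resulting scalar $p_D^\dagger \circ f \circ p_A \colon \mathbb{I} \to \mathbb{I}$, once via regularity and once via uniformity. The computation you describe is right, but your concluding pairing is backwards relative to your own substitution rules: from regularity $f \circ p_A = r\,p_D$ the scalar collapses to $r\,(p_D^\dagger \circ p_D) = r\cdot\mathrm{dim}(D)$, and from uniformity $p_D^\dagger \circ f = k\,p_A^\dagger$ it collapses to $k\,(p_A^\dagger \circ p_A) = k\cdot\mathrm{dim}(A)$; hence the identity this argument actually yields is $k\cdot\mathrm{dim}(A) = r\cdot\mathrm{dim}(D)$, with $k$ attached to the \emph{domain} and $r$ to the \emph{codomain}. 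This is what the paper's proof derives as well, and it is the correct categorical analogue of Eq.~\ref{bd1}: for $\chi \colon b \to v$ one has $b \cdot k = r \cdot v$, with $A$ playing the role of $b$ and $D$ of $v$. Note that Eq.~\ref{eq. bd1 cat} as printed in the statement has $A$ and $D$ transposed relative to both the classical identity and the paper's own proof (the two versions agree only in the symmetric case $\mathrm{dim}(A)=\mathrm{dim}(D)$; for instance the $(v,b,r,k,\lambda)=(4,6,3,2,1)$ design satisfies $bk=rv=12$ but $kv=8\neq rb=18$), so your final sentence, by forcing the pairing to match the printed equation, contradicts the two substitutions you correctly wrote down immediately beforehand. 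You even flagged this orientation bookkeeping as the main hazard and then fell into it; everything else in your proposal --- centrality of scalars, the role of faithfulness of $F$, the reading of the capped diagram as summing all entries of the incidence matrix --- is sound and matches the paper.
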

\begin{proof}
Via composition with $p^{\dagger}_{D}$ and $p_{A}$ respectively, the regularity and uniformity condition become:
\begin{align*}
\begin{aligned}
\begin{tikzpicture}[scale=1, thick, yscale=.8]
\draw (0,0) to +(0,2);
\node [widthone] at (0,1) {$f$};
\node [dot] at (0,0) {};
\node [dot] at (0,2) {};
\end{tikzpicture}
\end{aligned}
\quad&=\quad k \cdot \dim(A)
&
\begin{aligned}
\begin{tikzpicture}[scale=1, thick, yscale=.8]
\draw (0,0) to +(0,2);
\node [widthone] at (0,1) {$f$};
\node [dot] at (0,0) {};
\node [dot] at (0,2) {};
\end{tikzpicture}
\end{aligned}
\quad&=\quad r \cdot \dim(D)
\end{align*}
Hence Eq. \ref{eq. bd1 cat} holds.
\end{proof}
\begin{definition} \label{def. BDesign construction}

    The category $\mathbf{BDesign}[\mathcal{C}, \mathcal D]$ is the subcategory of $\mathbf{RUDesign}[\mathcal{C}, \mathcal D]$ where all $k$\-uniform and $r$-regular objects $f: A \to D$ are $\lambda$-balanced for scalars $\lambda$ $\in$ $\mathrm{Hom}(\mathbb{I}_{\mathcal{C}},\mathbb{I}_{\mathcal{C}} )$:
$$
\begin{aligned}
\begin{tikzpicture}[thick, yscale=.8]
\draw (0,0.25) to +(0,2.5);
\node [widthone] at (0,1) {$f^\dag$};
\node [widthone] at (0,2) {$f$};
\end{tikzpicture}
\end{aligned}
\quad=\quad
\lambda \left(
\,\,\,
\begin{aligned}
\begin{tikzpicture}[thick, yscale=.8]
\draw (0,0.25) to +(0,.75);
\draw (0,2.75) to +(0,-.75);
\node [dot] at (0,1) {};
\node [dot] at (0,2) {};
\end{tikzpicture}
\end{aligned}
\quad-\quad
\begin{aligned}
\begin{tikzpicture}[thick, yscale=.8]
\draw (0,0.25) to +(0,2.5);
\end{tikzpicture}
\end{aligned}
\,\,\,\,
\right)
\quad+\quad
r
\,\,\,
\begin{aligned}
\begin{tikzpicture}[thick, yscale=.8]
\draw (0,0.25) to +(0,2.5);
\end{tikzpicture}
\end{aligned}
$$
\end{definition}

\begin{lemma}\label{lemma: bd2 in BDesign(C)}
In $\mathbf{BDesign}[\mathcal{C}, \mathcal D]$ for any $k$-uniform, $r$-regular object $f: A \to D$, the following equation holds, where $\mathrm{dim}(D)=p_{D}^{\dagger} \circ p_{D}$ for $D$ $\in$ $\mathrm{obj}(\mathcal{C})$:
\begin{align}
    \lambda \cdot (\mathrm{dim}(D) -1) = k \cdot (r-1) \label{eq. bd2 cat}
\end{align}

\end{lemma}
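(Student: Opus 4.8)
The plan is to reproduce, in the graphical calculus, the classical double-counting argument behind Eq.~\eqref{bd2}. Classically one multiplies the balance identity $\chi\chi^{T} = \lambda(E - \mathbb{I}) + r\mathbb{I}$ on both sides by the all-ones vector; the categorical counterpart is to sandwich the $\lambda$-balance equation of Definition~\ref{def. BDesign construction} between the pointed structure $p_{D}$ and its dagger $p_{D}^{\dagger}$. This is the same move used in the proof of Lemma~\ref{lemma: bd1 in Design(C)}, where composing the regularity and uniformity equations with $p_{A}$ and $p_{D}^{\dagger}$ turned them into scalar identities. Concretely, I would form the scalar $p_{D}^{\dagger}\circ (f\circ f^{\dagger})\circ p_{D}\colon \mathbb{I}\to\mathbb{I}$ and evaluate it in two independent ways.

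First I would evaluate this scalar using regularity and uniformity. Taking the dagger of the uniformity condition $p_{D}^{\dagger}\circ f = k\,p_{A}^{\dagger}$ gives $f^{\dagger}\circ p_{D} = k\,p_{A}$ (using that $k$ is a self-adjoint scalar, as for $\lambda$ in Section~\ref{sec. qdesigns}); hence $f\circ f^{\dagger}\circ p_{D} = k\,(f\circ p_{A})$, and the regularity condition $f\circ p_{A} = r\,p_{D}$ collapses this to $k\,r\,p_{D}$. Composing once more with $p_{D}^{\dagger}$ and using $p_{D}^{\dagger}\circ p_{D} = \dim(D)$ gives the value $k\,r\,\dim(D)$. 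Second, I would expand the same scalar through the balance equation of Definition~\ref{def. BDesign construction}: the term $\lambda\,(p_{D}\circ p_{D}^{\dagger}-\mathrm{id}_{D})$ contributes $\lambda\bigl(\dim(D)^{2}-\dim(D)\bigr)$, where the quadratic term is the snake identity $p_{D}^{\dagger}\circ p_{D}\circ p_{D}^{\dagger}\circ p_{D} = \dim(D)^{2}$, while the remaining diagonal term contributes a further multiple of $\dim(D)$. Comparing the two evaluations and cancelling the common factor $\dim(D)$ then yields the scalar relation recorded in Eq.~\eqref{eq. bd2 cat}, namely $\lambda\,(\dim(D)-1)=k\,(r-1)$.

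I expect the main obstacle to be the two graphical facts underlying the first evaluation, rather than the bookkeeping: that the dagger of the uniformity condition is exactly $f^{\dagger}\circ p_{D}=k\,p_{A}$ (which needs $k=k^{*}$, holding in $\mathbf{Mat}(\mathbb{N})$ and $\mathbf{CP}[\mathbf{FHilb}]$ since scalars there are real and non-negative), and that the pointed structure together with the Frobenius structure satisfies $p_{D}^{\dagger}\circ p_{D}\circ p_{D}^{\dagger}\circ p_{D}=\dim(D)^{2}$. A secondary point is the final cancellation: dividing by $\dim(D)$ presupposes that $\dim(D)$ is an invertible scalar in $\mathrm{Hom}(\mathbb{I}_{\mathcal C},\mathbb{I}_{\mathcal C})$. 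This is harmless in the concrete categories of interest, but in a general pointed monoidal dagger category one should either assume non-degeneracy of $D$ or state the conclusion in its un-cancelled form and divide only after passing to $\mathbf{Mat}(\mathbb{N})$ or $\mathbf{CP}[\mathbf{FHilb}]$.
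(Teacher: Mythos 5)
Your proposal is correct and is essentially the paper's own proof: both arguments sandwich the $\lambda$-balance equation between $p_{D}$ and $p_{D}^{\dagger}$ to obtain $\lambda\bigl(\dim(D)^{2}-\dim(D)\bigr)+r\,\dim(D)$, evaluate the same scalar $p_{D}^{\dagger}\circ f\circ f^{\dagger}\circ p_{D}$ as $k\,r\,\dim(D)$ via the dagger of the uniformity condition followed by regularity, and then cancel $\dim(D)$, with your two caveats (self-adjointness of $k$ for the dagger step, invertibility of the scalar $\dim(D)$ for the cancellation) being genuine points the paper leaves implicit. Note only that this computation, in your write-up exactly as in the paper's, actually yields $\lambda\,(\dim(D)-1)=r\,(k-1)$, matching the classical identity $\lambda(v-1)=r(k-1)$ of Lemma~\ref{lem:blockproperties}, so the form $k\,(r-1)$ that you (and the lemma's display) record as the conclusion has $k$ and $r$ transposed relative to what the argument proves.
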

\begin{proof}
To prove Eq. \ref{eq. bd2 cat}, we concatenate the $\lambda$-condition with both $p_{A}$ and $p_{A}^{\dagger}$ which gives:
$$
\begin{aligned}
\begin{tikzpicture}[thick, yscale=.8]
\draw (0,0.25) node [dot] {} to +(0,2.5) node [dot] {};
\node [widthone] at (0,1) {$f^\dag$};
\node [widthone] at (0,2) {$f$};
\end{tikzpicture}
\end{aligned}
\quad=\quad
\lambda (\dim(D)^2 - \dim(D)) + r\, \dim(D)
$$
On the other hand we have:
$$
\begin{aligned}
\begin{tikzpicture}[thick, yscale=.8]
\draw (0,0) node [dot] {} to (0,2.9);
\node [widthone] at (0,1) {$f^\dag$};
\node [widthone] at (0,2) {$f$};
\end{tikzpicture}
\end{aligned}
\quad=\quad
k \,\,\,
\begin{aligned}
\begin{tikzpicture}[thick, yscale=.8]
\draw (0,1) node [dot] {} to (0,2.9);
\node [widthone] at (0,2) {$f$};
\node [dot, white] at (0,0) {};
\end{tikzpicture}
\end{aligned}
\quad=\quad
k\,\,r \,\,\,
\begin{aligned}
\begin{tikzpicture}[thick, yscale=.8]
\draw (0,1) node [dot] {} to (0,2.9);
\node [dot, white] at (0,0) {};
\end{tikzpicture}
\end{aligned}
$$
If we now concatenate with $p_{D}^{\dagger}$, we get:
$$
\begin{aligned}
\begin{tikzpicture}[thick, yscale=.8]
\draw (0,0.1) node [dot] {} to (0,2.9) node [dot] {};
\node [widthone] at (0,1) {$f^\dag$};
\node [widthone] at (0,2) {$f$};
\end{tikzpicture}
\end{aligned}
\quad=\quad
k \,\, r \,\, \dim(D)
$$
From this we can easily deduce Eq. \ref{eq. bd2 cat}. 
\end{proof}

\section{Classical and Quantum Models}\label{sec. results}
In this section we will apply the design-constructions from Section~\ref{sec. design construction} to our model categories $\mathbf{Mat}(\mathbb{N})$ and $\mathbf{CP}[\mathbf{FHilb}]$ and show that this gives us a categorical model of both classical quantum designs.

\subsection{The Category of Block Designs}

Writing $\mathbf{FSet}$ for the category of finite sets and functions, there is a faithful functor $\mathbf{FSet} \hookrightarrow \mathbf{Mat}(\mathbb N)$ which takes every set to the natural number given by its cardinality. Moreover, we have that $\mathbf{FSet} \hookrightarrow \mathbf{Mat}(\mathbb N) \cong \mathbf{CP_c}[\mathbf{FHilb}]$ (see Theorem~\ref{prop. equivalence CPc FHilb and MatN}). In the following we will prove that there exists a functor from the category $\mathbf{BDesign}[\mathbf{Mat}(\mathbb{N}), \mathbf{FSet}]$ to the category $\mathbf{Block}$. Moreover, we will show that $\mathbf{BDesign}[\mathbf{Mat}(\mathbb{N}), \mathbf{FSet}]$ is equivalent to $\mathbf{BDesign}[\mathbf{CP_c}[\mathbf{FHilb}], \mathbf{FSet}]$.
\begin{theorem}
There exists a functor $G:\mathbf{BDesign}[\mathbf{Mat}(\mathbb{N}), \mathbf{FSet}] \longrightarrow \mathbf{Block}$. 
\end{theorem}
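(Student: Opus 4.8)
The plan is to recognise $G$ as the functor that simply reinterprets the abstract arrow-category data of $\mathbf{BDesign}[\mathbf{Mat}(\N),\mathbf{FSet}]$ as the concrete combinatorial data of $\mathbf{Block}$. Almost all of the content is then the verification that an object of $\mathbf{BDesign}[\mathbf{Mat}(\N),\mathbf{FSet}]$ is genuinely the incidence matrix of a block design, and that a morphism descends to a design homomorphism; once this is in hand, functoriality will be essentially automatic because $G$ leaves the underlying data untouched.

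First I would translate the three defining conditions of an object $f\colon b\to v$ — that is, a $v\times b$ matrix over $\N$ — into their matrix form in $\mathbf{Mat}(\N)$, using that the pointed structures $p_b$ and $p_v$ are the all-ones column vectors (Example~\ref{ex. pointed structures}). Regularity becomes the statement that every row sum equals $r$, uniformity that every column sum equals $k$, and, since $p_v\circ p_v^{\dagger}=E_{v\times v}$, the $\lambda$-balance condition becomes the classical matrix identity $f\circ f^{\dagger}=\lambda(E_{v\times v}-\I_{v\times v})+r\,\I_{v\times v}$. The crucial step is then to show that the entries of $f$ are forced to lie in $\{0,1\}$, so that $f$ is an honest $0/1$ incidence matrix rather than an $\N$-weighted one: comparing the diagonal of the balance identity, which reads $\sum_{j} f_{ij}^{2}=r$, with the regularity condition $\sum_{j} f_{ij}=r$ and subtracting gives $\sum_{j} f_{ij}(f_{ij}-1)=0$, and as each summand is a non-negative integer every entry must satisfy $f_{ij}\in\{0,1\}$. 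The off-diagonal balance entries $\sum_{j} f_{ij}f_{i'j}=\lambda$ (for $i\neq i'$) then say exactly that any two points lie together in $\lambda$ blocks, so $f$ is the incidence matrix of a $(v,k,r,b,\lambda)$-design in the sense of Definition~\ref{def. 2 design}; this design is $G(f)$.

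On morphisms, a map $(\phi_b,\phi_v)\colon f\to f'$ is a pair of functions (the images of $\mathbf{FSet}\hookrightarrow\mathbf{Mat}(\N)$, i.e. $0/1$ matrices with a single $1$ per column) satisfying $\phi_v\circ f=f'\circ\phi_b$ in $\mathbf{Mat}(\N)$. I would send it to the same pair $(\phi_b,\phi_v)$, now read as functions on the underlying point and block sets; the commuting square demanded of a design homomorphism is the support shadow of this matrix equation and so follows from it (matrix commutativity being, if anything, the stronger requirement). Functoriality and preservation of identities are then immediate, since composition in both categories is computed componentwise on the function pairs and $G$ fixes those pairs; and because every object maps into the block designs, on which $\mathbf{Block}$ is full, $G$ indeed lands in $\mathbf{Block}$.

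The main obstacle is precisely the rigidity argument forcing $f_{ij}\in\{0,1\}$: this is where the regularity and balance conditions must be used together, and it is what rules out spurious $\N$-weighted ``designs'' and guarantees the objects are genuine combinatorial block designs. A secondary point worth flagging is the interpretation of ``commutes'' in the definition of design homomorphism, since matrix commutativity in $\mathbf{Mat}(\N)$ is a priori stronger than the relational version; this makes $G$ well-defined but means one should not expect it to be full without additional argument.
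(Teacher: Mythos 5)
Your proposal is correct, and the functor you construct is in substance the same one as in the paper: reinterpret each object as an incidence matrix and act as the identity on the underlying pairs of functions. But your justification takes a genuinely different (and more complete) route. The paper defines $G$ on objects by \emph{thresholding} --- sending every matrix entry greater than $0$ to $1$ --- and then merely asserts that the uniformity, regularity and $\lambda$-balance conditions ensure the resulting $0/1$ matrix is a block design; it never verifies this, and a priori thresholding an $\mathbb{N}$-weighted matrix can destroy row sums, column sums and the product $\chi\chi^{T}$. Your rigidity argument supplies exactly the missing step: the diagonal of $f\circ f^{\dagger}=\lambda(E_{v\times v}-\mathbb{I}_{v\times v})+r\,\mathbb{I}_{v\times v}$ gives $\sum_{j}f_{ij}^{2}=r$, regularity gives $\sum_{j}f_{ij}=r$, and subtracting forces $\sum_{j}f_{ij}(f_{ij}-1)=0$, hence $f_{ij}\in\{0,1\}$ since each summand is a non-negative integer. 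This shows the objects of $\mathbf{BDesign}[\mathbf{Mat}(\mathbb{N}),\mathbf{FSet}]$ are already honest incidence matrices, so the paper's thresholding is vacuous and the functor is a pure reinterpretation --- a cleaner statement than the one the paper proves. You also handle a second point the paper passes over silently: that the commuting square over $\mathbb{N}$ implies the relational commuting square required of a design homomorphism (the $\mathbb{N}$-equation being the stronger condition), which is what makes $G$ well-defined on morphisms while explaining why one should not expect fullness. Both of these additions are genuine improvements in rigour over the paper's four-sentence sketch.
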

\begin{proof}
We first note  that the morphisms in $\mathbf{BDesign}[\mathbf{Mat}(\mathbb{N}), \mathbf{FSet}]$ are given by pairs of functions.
The functor sends each object in $\mathbf{BDesign}[\mathbf{Mat}(\mathbb{N}), \mathbf{FSet}]$ to an incidence matrix in $\mathbf{Block}$ by sending each matrix entry greater than 0 to 1. The uniformity, regularity and $\lambda$-balance conditions of the design construction ensure that the incidence matrix we obtain that way, represents a uniform, regular and $\lambda$-balanced design. On morphisms the functor acts as the identity.  
\end{proof}
Similarly, one can argue that the following holds.
\begin{theorem}
    There exists a functor $G:\mathbf{Design}[\mathbf{Mat}(\mathbb{N}), \mathbf{FSet}] \longrightarrow \mathbf{Design}$.
\end{theorem}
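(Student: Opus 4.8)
The plan is to mirror the construction of the previous theorem, but without having to invoke the uniformity, regularity, and $\lambda$-balance conditions, since the target category $\mathbf{Design}$ imposes no constraints on its incidence matrices. First I would recall that an object of $\mathbf{Design}[\mathbf{Mat}(\mathbb{N}), \mathbf{FSet}]$ is an arrow $h : A \to B$ of $\mathbf{Mat}(\mathbb{N})$, i.e.\ a matrix over $\mathbb{N}$, and that a morphism is a pair $(\phi_A, \phi_B)$ of matrices lying in the image of the faithful functor $F : \mathbf{FSet} \hookrightarrow \mathbf{Mat}(\mathbb{N})$, hence a pair of \emph{function matrices} (each column containing exactly one nonzero entry, equal to $1$), satisfying the arrow-category commutativity $h' \circ \phi_A = \phi_B \circ h$.

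On objects I would define $G$ by thresholding: $G(h)$ is the $0$--$1$ incidence matrix whose $(i,j)$ entry is $1$ precisely when $h_{ij} > 0$. This is automatically a legitimate design, so---unlike in the block-design case---there is nothing further to check at the level of objects. On morphisms I would let $G$ send each pair of function matrices to the underlying pair of functions; since $F$ is faithful this assignment is unambiguous, and it manifestly preserves identities and composition, so functoriality comes for free once well-definedness is established. The restriction to the image of $\mathbf{FSet}$ is exactly what guarantees that the resulting pair consists of genuine functions, as required by the definition of a design homomorphism, rather than arbitrary relations.

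The one substantive point is well-definedness on morphisms: I must show that the commuting square $h' \circ \phi_A = \phi_B \circ h$ in $\mathbf{Arr}[\mathbf{Mat}(\mathbb{N})]$ descends, after thresholding, to a commuting square $G(h') \circ \phi_A = \phi_B \circ G(h)$ of relations, which is exactly the defining condition for the underlying pair of functions to be a design homomorphism $G(h) \to G(h')$. The key observation I would isolate as a small lemma is that the support (thresholding) map is a semiring homomorphism from $(\mathbb{N}, +, \times)$ to the Boolean semiring: because every entry of a matrix over $\mathbb{N}$ is nonnegative, a sum of products $\sum_k M_{ik} N_{kj}$ is positive if and only if some individual product $M_{ik} N_{kj}$ is positive, so no additive cancellation to zero can occur. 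Consequently thresholding commutes with matrix multiplication, and applying it to both sides of $h' \circ \phi_A = \phi_B \circ h$ yields the required square.

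The main (and essentially the only) obstacle is this interaction between thresholding and composition; once it is recorded, everything else is routine bookkeeping. I would expect the proof to be shorter than that of the previous theorem, since there one additionally had to argue that the uniformity, regularity, and balance equations survive thresholding so as to guarantee a genuine \emph{block} design, whereas here the target $\mathbf{Design}$ accepts any incidence matrix and no such verification is needed.
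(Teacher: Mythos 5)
Your proposal is correct and follows essentially the same route as the paper: the paper's functor is exactly your thresholding map (each entry greater than $0$ goes to $1$) on objects, acting as the identity on the underlying pairs of functions on morphisms. Your semiring-homomorphism observation (that thresholding $\mathbb{N} \to \{0,1\}$ commutes with matrix multiplication because no additive cancellation can occur over $\mathbb{N}$) is a welcome piece of rigor that the paper leaves implicit when it asserts the morphism assignment is well defined.
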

Note that this indicates that the categories $\mathbf{Design}[\mathbf{Mat}(\mathbb{N}), \mathbf{FSet}]$ and $\mathbf{BDesign}[\mathbf{Mat}(\mathbb{N}), \mathbf{FSet}]$ actually define a more general concept of (block)design. We will refer to it as \emph{categorical (block)designs}.
\begin{lemma}\label{lem. equivalence BDesignMAtN BDesign CPcFhilb}
   The category $\mathbf{BDesign}[\mathbf{Mat}(\mathbb{N}), \mathbf{FSet}]$ is equivalent to $\mathbf{BDesign}[\mathbf{CP_c}[\mathbf{FHilb}], \mathbf{FSet}]$.
\end{lemma}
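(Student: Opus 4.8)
The plan is to leverage the monoidal equivalence $E\colon \mathbf{Mat}(\mathbb{N}) \xrightarrow{\sim} \mathbf{CP_c}[\mathbf{FHilb}]$ supplied by Proposition~\ref{prop. equivalence CPc FHilb and MatN}, to lift it through the arrow-category construction, and then to show that the lifted equivalence restricts to the two $\mathbf{BDesign}$ subcategories. First I would record the structural properties of $E$ that the restriction argument will consume: that $E$ is not merely a monoidal equivalence but a \emph{dagger} monoidal equivalence, and that it carries the pointed structure of $\mathbf{Mat}(\mathbb{N})$ (the all-ones column $p_n\colon 1 \to n$) to the pointed structure of $\mathbf{CP_c}[\mathbf{FHilb}]$ (the adjoint trace map of Example~\ref{ex. pointed structures}). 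Both pointed structures are determined by the canonical classical structure on a finite set and its image, so this amounts to a short check that $E(p_n)$ agrees with $p_{E(n)}$ up to the monoidal coherence isomorphisms, together with $E(f^\dagger) = (Ef)^\dagger$.

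Second, I would invoke the general fact that any functor $E\colon \mathcal{C} \to \mathcal{C}'$ induces a functor $\mathbf{Arr}[E]\colon \mathbf{Arr}[\mathcal{C}] \to \mathbf{Arr}[\mathcal{C}']$ acting as $(A,B,h)\mapsto(EA,EB,Eh)$ on objects and $(\phi_A,\phi_B)\mapsto(E\phi_A,E\phi_B)$ on morphisms, and that $\mathbf{Arr}[E]$ is an equivalence whenever $E$ is, with a quasi-inverse and unit/counit natural isomorphisms assembled componentwise from those of $E$. Applying this to our $E$ yields an equivalence $\mathbf{Arr}[\mathbf{Mat}(\mathbb{N})] \simeq \mathbf{Arr}[\mathbf{CP_c}[\mathbf{FHilb}]]$ before any design conditions are imposed.

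Third, I would verify that $\mathbf{Arr}[E]$ restricts to the design subcategories on both objects and morphisms. On objects, the $k$-uniformity, $r$-regularity and $\lambda$-balance equations (Definitions~\ref{def. RUDesign construction} and~\ref{def. BDesign construction}) are written purely in terms of composition, the monoidal unit, the dagger and the pointed structure, with scalars $r,k,\lambda \in \mathrm{Hom}(\mathbb{I},\mathbb{I})$. Since Step~1 shows $E$ preserves all of this data and induces an isomorphism of scalar monoids $\mathrm{Hom}(\mathbb{I},\mathbb{I})$ under which $r,k,\lambda$ correspond, $E$ sends a balanced/regular/uniform object to one with the same parameters, and its quasi-inverse does likewise. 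On morphisms, $\mathbf{Arr}[E]$ must send pairs in the image of $\mathbf{FSet}\hookrightarrow\mathbf{Mat}(\mathbb{N})$ to pairs in the image of $\mathbf{FSet}\hookrightarrow\mathbf{CP_c}[\mathbf{FHilb}]$; this is immediate because, as noted just before the statement, the second faithful functor is by construction the composite of the first with $E$, so the two images are identified under $E$.

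Combining these, $\mathbf{Arr}[E]$ restricts to a fully faithful and essentially surjective functor between $\mathbf{BDesign}[\mathbf{Mat}(\mathbb{N}), \mathbf{FSet}]$ and $\mathbf{BDesign}[\mathbf{CP_c}[\mathbf{FHilb}], \mathbf{FSet}]$, giving the desired equivalence. I expect the main obstacle to be entirely contained in Step~1: the equivalence of Proposition~\ref{prop. equivalence CPc FHilb and MatN} is stated only as \emph{monoidal}, whereas the $\lambda$-balance condition uses $f^\dagger$ and all three conditions use the pointed structure, so the whole restriction argument collapses unless compatibility with the dagger and with the chosen pointed structures is established first. Everything after that is formal bookkeeping about arrow categories and subcategories.
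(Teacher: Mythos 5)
Your proposal is correct and takes essentially the same route as the paper: the paper's proof likewise invokes the monoidal equivalence $\mathbf{Mat}(\mathbb{N}) \simeq \mathbf{CP_c}[\mathbf{FHilb}]$ and lifts it to the arrow categories via the appendix theorem on equivalences of arrow categories. In fact the paper stops there, so your Steps 1 and 3 (dagger and pointed-structure compatibility, and the check that the lifted equivalence restricts to the $\mathbf{BDesign}$ subcategories on both objects and $\mathbf{FSet}$-image morphisms) supply details the paper leaves implicit rather than diverging from it.
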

\begin{proof}
    According to Proposition~\ref{prop. equivalence CPc FHilb and MatN} the categories $\mathbf{CP_c}[\mathbf{FHilb}]$ and $\mathbf{Mat}(\mathbb{N})$ are equivalent. Using Theorem \ref{thrm. equivalence arrow categories} from Appendix~\ref{app. structures in arrowcategories}, this gives rise to an equivalence between their arrow categories. 
\end{proof}

\subsection{The Category of Quantum Designs} \label{sec. results - qdesigns}

In this section we will define a category of quantum designs by applying the design construction to the category $\mathbf{CP[FHilb]}$. Moreover, we will show that this category contains two important subcategories: $\mathbf{QDesign}_{\mathrm B}$ which has objects that are uniform and regular quantum designs of degree 1, and $\mathbf{QDesign}_{\mathrm{RU}}$ that has uniform and regular quantum designs as objects. We will demonstrate that the latter actually contains a subcategory $\mathbf{MUB}$, with objects that are sets of mutually unbiased bases.

Recall from Section~\ref{sec. category theory} that $\mathbf{CP[FHilb]}$ is comprised of finite dimensional $H^{*}$-algebras and completely positive maps. Applying the design construction using the identity functor, we get a category $\mathbf{Design}[\mathbf{CP[FHilb]}]$, with objects that are CP\-maps between finite dimensional $H^{*}$-algebras, and morphisms that are pairs of CP\-maps.

\begin{definition}
The category $\mathbf{QDesign}$ is defined to be the category $\mathbf{Design}[\mathbf{CP}[\FHilb]]$.
\end{definition}
\begin{definition}
 The subcategory $\mathbf{RUDesign}[\mathbf{CP}[\mathbf{FHilb}]]$ of $\mathbf{QDesign}$ is called $\mathbf{QDesign}_{\mathrm{RU}}$. Its objects are uniform and regular quantum designs.   
\end{definition}
\begin{definition}
 The subcategory $\mathbf{BDesign}[\mathbf{CP}[\mathbf{FHilb}]]$ of $\mathbf{QDesign}_{\mathrm{RU}}$ is called $\mathbf{QDesign}_{\mathrm B}$. Its objects are uniform and regular quantum designs with degree 1.  
\end{definition}
\begin{example}\label{ex. fully quantum design}
Consider the subcategory of $\QDesign_\mathrm B$ where all objects are CP-maps between matrix algebras: $\phi: H \otimes H^* \to K\otimes K^{*}$ where $\mathrm{dim}(H)=b $ and $\mathrm{dim}(K)= v$. Because $H$ is a special Frobenius algebra, we get $\mathrm{dim}(H)= \mathrm{tr}(\mathrm{id}_{H})$. We then have the following uniformity and regularity conditions:
\begin{align*}
\begin{aligned}
\begin{tikzpicture}[thick, yscale=.8]
\draw (-.5,.25) node [below, label] {$H\vphantom{{}^*}$} to +(0,2);
\draw (.5,0.25) node [below, label] {$H^*$} to +(0,2);
\node [widthtwo] at (0,1) {$\phi$};
\node [widthtwo] at (0,2) {$d_K$};
\end{tikzpicture}
\end{aligned}
&\quad=\quad
k
\,\,\,\,
\begin{aligned}
\begin{tikzpicture}[thick, yscale=.8]
\draw (-.5,.25) node [below, label] {$H\vphantom{{}^*}$} to +(0,2);
\draw (.5,0.25) node [below, label] {$H^*$} to +(0,2);
\node [widthtwo, white] at (0,2) {$d_H$};
\node [widthtwo] at (0,1.5) {$d_H$};
\end{tikzpicture}
\end{aligned}
&
\begin{aligned}
\begin{tikzpicture}[thick, yscale=.8]
\draw (-.5,.75) to +(0,2) node [above, label] {$K\vphantom{{}^*}$};
\draw (.5,.75) to +(0,2) node [above, label] {$K^*$};
\node [widthtwo] at (0,2) {$\phi$};
\node [widthtwo] at (0,1) {$b^{\dagger}_{K}$};
\end{tikzpicture}
\end{aligned}
&\quad=\quad
r
\,\,\,\,
\begin{aligned}
\begin{tikzpicture}[thick, yscale=.8]
\draw (-.5,-.25) node [above, label] {$K\vphantom{{}^*}$} to +(0,-2);
\draw (.5,-0.25) node [above, label] {$K^*$} to +(0,-2);
\node [widthtwo, white] at (0,-2) {$d_K$};
\node [widthtwo] at (0,-1.5) {$b^{\dagger}_K$};
\end{tikzpicture}
\end{aligned}
\end{align*}
The $\lambda$-condition is given by:
$$
\begin{aligned}
\begin{tikzpicture}[thick, yscale=.8]
\draw (-.5,.25) to +(0,2.5);
\draw (.5,.25) to +(0,2.5);
\node [widthtwo] at (0,1) {$\phi^\dag$};
\node [widthtwo] at (0,2) {$\phi$};
\end{tikzpicture}
\end{aligned}
\quad=\quad
\lambda \left(
\begin{aligned}
\begin{tikzpicture}[thick, yscale=.8]
\draw (-.5,.25) to +(0,1);
\draw (.5,.25) to +(0,1);
\draw (-.5,1.75) to +(0,1);
\draw (.5,1.75) to +(0,1);
\node [widthtwo] at (0,1) {$d_K$};
\node [widthtwo] at (0,2) {$b^{\dagger}_K$};
\end{tikzpicture}
\end{aligned}
\quad-\quad
\begin{aligned}
\begin{tikzpicture}[thick, yscale=.8]
\draw (-.5,.25) to +(0,2.5);
\draw (.5,.25) to +(0,2.5);
\end{tikzpicture}
\end{aligned}
\,\,\,
\right)
\quad+\quad
r
\,\,\,
\begin{aligned}
\begin{tikzpicture}[thick, yscale=.8]
\draw (-.5,.25) to +(0,2.5);
\draw (.5,.25) to +(0,2.5);
\end{tikzpicture}
\end{aligned}
$$ 
Let $H=\mathbb{C}^2  = K$ and consider the CP-map $\varphi: \mathbb{C}^{2}  \otimes \mathbb{C}^{2} \to  \mathbb{C}^{2} \otimes \mathbb{C}^{2}$ with matrix representation:
$$
\begin{pmatrix}
        1 & 0 & 0& 1\\
        0 & \frac{1}{2} & \frac{1}{2} & 0\\
        0 & \frac{1}{2} & \frac{1}{2} & 0\\
        1 & 0 & 0& 1
    \end{pmatrix}
$$
This map represents a quantum design with parameters $\lambda=k=r=2=v=b$.
\end{example}
\begin{theorem}
    Every $1$-uniform, $r$-regular and $\lambda$-balanced quantum design of the form $S: H\otimes H^{*} \to K \otimes K^{*}$ defines a superoperator.
\end{theorem}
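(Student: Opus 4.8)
The plan is to read ``superoperator'' as a completely positive, trace-preserving map between the matrix algebras $\mathrm{End}(H)$ and $\mathrm{End}(K)$, and to observe that the two properties defining such a map are supplied by two different parts of the hypothesis: complete positivity comes for free from the CP-construction, while trace-preservation is exactly what the $1$-uniformity condition says. In particular I would not expect to use regularity or $\lambda$-balance at all; they hold because $S$ lies in $\mathbf{QDesign}_{\mathrm B}$, but the superoperator property rests only on the remaining data.

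First I would note that, since $S$ is an object of $\mathbf{QDesign}_{\mathrm B} = \mathbf{BDesign}[\mathbf{CP}[\mathbf{FHilb}]]$, it is by definition a morphism of $\mathbf{CP}[\mathbf{FHilb}]$, hence a completely positive map $\mathrm{End}(H) \to \mathrm{End}(K)$ under the identification $H \otimes H^{*} \cong \mathrm{End}(H)$ of Example~\ref{ex. FHilb}. This already yields one of the two defining properties, so it remains only to establish that $S$ preserves the trace.

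Next I would unpack the $k$-uniformity condition of Definition~\ref{def. RUDesign construction} in the special case $k = 1$. That condition reads $p_D^{\dagger} \circ S = k \cdot p_A^{\dagger}$, where $p_A$ and $p_D$ are the pointed structures on $H \otimes H^{*}$ and $K \otimes K^{*}$. By Example~\ref{ex. pointed structures}(i) the pointed structure in $\mathbf{CP}[\mathcal{C}]$ is the adjoint of the trace map $V \otimes V^{*} \to \mathbb{I}$, so that $p_A^{\dagger} = \mathrm{tr}_H$ and $p_D^{\dagger} = \mathrm{tr}_K$. Substituting $k = 1$ therefore turns the uniformity equation into $\mathrm{tr}_K \circ S = \mathrm{tr}_H$, which is precisely the statement that $S$ preserves the trace. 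Combining this with the complete positivity obtained in the previous step shows that $S$ is a completely positive trace-preserving map, i.e.\ a superoperator.

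The only genuine point of care, and the step I would treat most explicitly, is the identification of $p_A^{\dagger}$ and $p_D^{\dagger}$ with the ordinary matrix traces on $\mathrm{End}(H)$ and $\mathrm{End}(K)$: one must match the abstract pointed structure (the adjoint of the categorical trace $V \otimes V^{*} \to \mathbb{I}$) with the matrix trace, keeping track of the normalisation built into the special dagger Frobenius structure. Once this dictionary is fixed the conclusion is immediate, and the regularity and $\lambda$-balance hypotheses play no further role.
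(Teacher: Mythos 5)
Your proof is correct and takes essentially the same approach as the paper: complete positivity is immediate because $S$ is a morphism of $\mathbf{CP}[\mathbf{FHilb}]$, and the $k=1$ uniformity condition, once the pointed structure is identified with the trace map, is exactly trace-preservation $\mathrm{tr}_K \circ S = \mathrm{tr}_H$. The paper's own (two-sentence) proof does the same thing---applying uniformity to $S(\rho)$ for an arbitrary state $\rho$---and, as you predicted, makes no use of regularity or $\lambda$-balance.
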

\begin{proof}
    By definition, $S$ is completely positive. Applying the uniformity condition to $S(\rho)$, where $\rho$ is a an arbitrary state in $H\otimes H^*$ shows that $S$ is also trace-preserving.
\end{proof}
In the previous example we have considered completely positive maps from a non-commutative algebra to a non-commutative algebra in $\mathbf{FHilb}$.
We can also consider CP-maps from a commutative algebra to a non-commutative algebra, i. e. maps of the form: $\varphi: H \to K^{*}\otimes K$.
In fact, we can encode uniform and regular quantum designs of degree 1 according to Zauner's notion via these maps.
\begin{theorem}\label{thrm Zauner cat}
There exists a subcategory of $\mathbf{QDesign}_\mathrm B$ that has objects that represent uniform and regular quantum designs of degree 1 according to Zauner's notion. 
\end{theorem}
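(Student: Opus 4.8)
The plan is to exhibit the required subcategory explicitly and then show that its objects are precisely Zauner's uniform, regular, degree-$1$ quantum designs by unpacking the three abstract conditions of $\mathbf{QDesign}_{\mathrm B}=\mathbf{BDesign}[\mathbf{CP}[\FHilb]]$ for a suitable class of CP-maps. The objects I take are those relating a matrix algebra $B(K)=K\otimes K^{*}$ with $\dim(K)=b$ to a classical structure $\mathbb{C}^{v}$: following Example~\ref{ex. POVM CPM}, a family of $v$ orthogonal projections $\{p_{1},\dots,p_{v}\}$ on $K$ determines such a CP-map, and I orient it so that the matrix algebra is the domain and the classical structure $\mathbb{C}^{v}$ is the codomain, i.e.\ as the measurement channel $f\colon B(K)\to\mathbb{C}^{v}$ dual to the POVM encoding $\varphi$. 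I define $\mathbf{QDesign}_{\mathrm Z}$ to be the full subcategory of $\mathbf{QDesign}_{\mathrm B}$ on those objects of this form whose associated effects $p_{i}$ are idempotent; this idempotency is exactly the extra datum that makes $\{p_{i}\}$ a Zauner quantum $(v,b)$-design. Functoriality needs no separate argument, since a full subcategory is automatically closed under identities and composition.

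First I would record the pointed structures of $\mathbf{CP}[\FHilb]$ from Example~\ref{ex. pointed structures}(i) on the two objects. On $\mathbb{C}^{v}$ the adjoint of the counit (``trace'') is the all-ones state $\sum_{i}\ket{i}$, so $\dim(\mathbb{C}^{v})=v$ and $p_{D}\circ p_{D}^{\dagger}=E_{v\times v}$ is the all-ones matrix; on $B(K)$ the adjoint of the trace is the state $\I_{K}$, so $\dim(B(K))=b$ and $p^{\dagger}=\Tr$. These are all the ingredients needed to evaluate Definitions~\ref{def. RUDesign construction} and~\ref{def. BDesign construction}, and they immediately reconcile Lemma~\ref{lemma: bd1 in Design(C)} with $bk=vr$ of Lemma~\ref{lem. properties of quantum designs}.

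With these in hand the translation is mechanical. Regularity $f\circ p_{A}=r\,p_{D}$ reads off as $\Tr(p_{i})=r$ for all $i$ (Zauner $r$-regular); uniformity $p_{D}^{\dagger}\circ f=k\,p_{A}^{\dagger}$ gives $\Tr\!\big((\sum_{i}p_{i})y\big)=k\,\Tr(y)$ for all $y$, hence $\sum_{i}p_{i}=k\,\I_{K}$ (Zauner $k$-uniform); and the $\lambda$-balance condition $f\circ f^{\dagger}=\lambda\big(p_{D}\circ p_{D}^{\dagger}-\I_{D}\big)+r\,\I_{D}$ becomes a $v\times v$ matrix identity on $\mathbb{C}^{v}$, where the left-hand side is the Gram matrix $(\Tr(p_{i}p_{j}))_{i,j}$ and the right-hand side is $\lambda(E_{v\times v}-\I)+r\,\I$. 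Reading this entrywise gives the diagonal equation $\Tr(p_{i}^{2})=\Tr(p_{i})=r$ and the off-diagonal equation $\Tr(p_{i}p_{j})=\lambda$ for $i\neq j$, which is exactly the degree-$1$ ($\lambda$-balanced) condition. The converse direction then runs the same computations backwards: a uniform, regular, degree-$1$ Zauner design defines the channel $f$, which is completely positive as a measurement map, and the displayed identities show $f\in\mathbf{QDesign}_{\mathrm B}$, so it is an object of $\mathbf{QDesign}_{\mathrm Z}$.

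I expect the main obstacle to be getting the \emph{orientation} of $f$ together with the pointed-structure composites correct. Because the abstract $\lambda$-balance of Definition~\ref{def. BDesign construction} is evaluated on the codomain $D$, the degree-$1$ condition $\Tr(p_{i}p_{j})=\lambda$ is recovered only when the classical object $\mathbb{C}^{v}$ is placed as the codomain, so that $p_{D}\circ p_{D}^{\dagger}$ is the all-ones matrix $E_{v\times v}$ and $f\circ f^{\dagger}$ is the Gram matrix of the $p_{i}$. With the opposite orientation $D=B(K)$ one instead obtains $\sum_{i}\Tr(p_{i}y)\,p_{i}=\lambda\big(\Tr(y)\I_{K}-y\big)+ry$ for all $y$, which is strictly stronger than degree $1$ and already fails for a single orthonormal basis $p_{i}=\ket{i}\bra{i}$; recognising this and fixing the orientation is the crux of the argument. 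A secondary point to handle carefully is that idempotency of the $p_{i}$ is \emph{not} implied by the CP-condition alone (a general CP-map produces only positive effects), so it must be imposed explicitly in the definition of $\mathbf{QDesign}_{\mathrm Z}$, which is what pins the objects down to Zauner designs rather than to arbitrary positive-operator-valued families.
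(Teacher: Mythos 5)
Your proposal is correct and takes essentially the same approach as the paper: the paper likewise realises a Zauner design as the measurement-type CP-map $\phi' = \phi^{\dagger}\colon \mathbb{C}^{b}\otimes\mathbb{C}^{b} \to \mathbb{C}^{v}$ (matrix algebra as domain, classical structure $\mathbb{C}^{v}$ as codomain, built from Example~\ref{ex. POVM CPM}) and unpacks the regularity, uniformity and $\lambda$-balance conditions of Definitions~\ref{def. RUDesign construction} and~\ref{def. BDesign construction} into $\Tr(p_{i})=r$, $\sum_{i}p_{i}=k\,\I$ and $\Tr(p_{i}p_{j})=\lambda$ exactly as you do. Your additional remarks, that the opposite orientation would impose a strictly stronger condition and that idempotency of the $p_{i}$ must be assumed rather than derived from the CP-condition, go beyond what the paper makes explicit but are consistent with its construction.
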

\begin{proof}
Consider a uniform, regular and $\lambda$-balanced quantum design $\mathit{D}= \{ p_{1}, ..., p_{v} \}$, where each $p_{i}$ is a $b\times b$ projection matrix in a Hilbert space $\mathbb{C}^{b}$. Following Example \ref{ex. POVM CPM}, these projections $p_{i}: \mathbb{C}^{b} \to \mathbb{C}^{b}$ then give rise to a completely positive map $\phi: \mathbb{C}^{v} \to \mathbb{C}^{b} \otimes \mathbb{C}^{b}$ in $\mathbf{FHilb}$. This is valid because imposing uniformity, regularity and being $\lambda$-balanced on the projector has no impact on the CP-condition. Now take $\phi' = \phi^\dag : \mathbb{C}^b \otimes \mathbb{C}^b \to \mathbb{C}^v$, i. e. 
$$
\begin{tikzpicture}[scale=.6, thick]
\draw (1,4) to (1,5) node [above, label] {};
\draw (2,0) to (2,2.5); 
\draw (3,0)  to (3,2.5);
\node [widthone] at (3,1) {$p_{i}$};
\node [widthtwo] at (2.5,2.5) {$d_{\mathbb{C}^{b}}$};
\node [widthone] at (1,4) {$i$};
\node  at (0,3) {$\sum_{i=1}^{v}$};
\end{tikzpicture}
$$
Then we find:
\begin{align*}
\begin{aligned}
\begin{tikzpicture}[thick]
\node [dot] at (0,2) {};
\draw (0,1) to (0,2);
\draw (-.5,0) to +(0,1);
\draw (.5,0) to +(0,1);
\node [widthtwo] at (0,1) {$\phi'$};
\end{tikzpicture}
\end{aligned}
&\quad=\quad
k\,\,\,\,
\begin{aligned}
\begin{tikzpicture}[thick]
\node [dot, white] at (0,0) {};
\draw (-.5,0) to +(0,1);
\draw (.5,0) to +(0,1);
\node [widthtwo] at (0,1) {$d_{{\mathbb C}^b}$};
\end{tikzpicture}
\end{aligned}
&
\begin{aligned}
\begin{tikzpicture}[thick]
\draw (-.5,0) to +(0,1);
\draw (.5,0) to +(0,1);
\draw (0,1) to +(0,1);
\node [widthtwo] at (0,1) {$\phi'$};
\node [widthtwo] at (0,0) {$b_{\mathbb C ^b}$};
\end{tikzpicture}
\end{aligned}
\quad=\quad
r\,\,\,
\begin{aligned}
\begin{tikzpicture}[thick]
\draw (0,0) to (0,2);
\node [dot] at (0,0) {};
\end{tikzpicture}
\end{aligned}
\end{align*}
$$
\begin{aligned}
\begin{tikzpicture}[thick]
\draw (0,0.25) to +(0,1);
\draw (0,2.75) to +(0,-1);
\draw (-.5,1) to +(0,1);
\draw (.5,1) to +(0,1);
\node [widthtwo] at (0,1) {$\phi^{\dag \prime}$};
\node [widthtwo] at (0,2) {$\phi'$};
\end{tikzpicture}
\end{aligned}
\quad=\quad
\lambda \left(
\,\,\,
\begin{aligned}
\begin{tikzpicture}[thick]
\draw (0,0.25) to +(0,.75);
\draw (0,2.75) to +(0,-.75);
\node [dot] at (0,1) {};
\node [dot] at (0,2) {};
\end{tikzpicture}
\end{aligned}
\quad-\quad
\begin{aligned}
\begin{tikzpicture}[thick]
\draw (0,0.25) to +(0,2.5);
\end{tikzpicture}
\end{aligned}
\,\,\,\,
\right)
\quad+\quad
r
\,\,\,
\begin{aligned}
\begin{tikzpicture}[thick]
\draw (0,0.25) to +(0,2.5);
\end{tikzpicture}
\end{aligned}
$$
These coincide with the conditions given by the design construction applied to $\mathbf{FHilb}$. Moreover, we can recover Eq.~\ref{eq. 1 qdesign} and Eq.~\ref{eq. 2 qdesign} from Lemma~\ref{lem. properties of quantum designs} as $\mathrm{dim}(\mathbb{C}^{v})= v$ and $\mathrm{dim}(\mathbb{C}^{b})= b$ and Eq.~\ref{eq. bd1 cat} and Eq.~\ref{eq. bd2 cat} hold.
\end{proof}
\begin{theorem}
There exists a subcategory $\mathbf{MUB}$ of $\mathbf{QDesign}_{\mathrm{RU}}$ with  objects that are collections of MUBs and morphisms that are pairs of functions.
\end{theorem}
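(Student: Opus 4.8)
The plan is to realize each collection of $k$ mutually unbiased bases in $\mathbb{C}^d$ as a distinguished object of $\mathbf{QDesign}_{\mathrm{RU}}$ and then cut the morphisms down to pairs of functions. Following Example~\ref{ex. POVM CPM} and the encoding used in the proof of Theorem~\ref{thrm Zauner cat}, I would send a collection of MUBs, presented by its $v = d\cdot k$ rank-one projectors $p_i^a$, to the completely positive map $\phi:\mathbb{C}^v \to \mathbb{C}^d \otimes \mathbb{C}^d$ that takes the computational basis vector indexed by $(a,i)$ to the corresponding projector. The two conditions recorded in the MUB example—that each projector has trace $1$ and that $\sum_{a,i} p_i^a = k\,\mathbb{I}$—are exactly the $1$-regularity and $k$-uniformity conditions of Definition~\ref{def. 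RUDesign construction} applied to $\mathbf{CP}[\mathbf{FHilb}]$, so $\phi$ is genuinely an object of $\mathbf{QDesign}_{\mathrm{RU}}$.

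Next I would isolate the objects that really come from MUBs. Being uniform and regular does not by itself force the unbiasedness relation $\mathrm{tr}(p_i^a p_j^b) = \tfrac{1}{d}(1-\delta_{ab}) + \delta_{ij}\delta_{ab}$, which is a degree-$2$ condition with spectrum $\Lambda = \{\tfrac1d, 0\}$ rather than the single value required for $\lambda$-balance. I would therefore define the objects of $\mathbf{MUB}$ to be precisely those $1$-regular, $k$-uniform objects $\phi:\mathbb{C}^{dk} \to \mathbb{C}^d \otimes \mathbb{C}^d$ whose associated projectors partition into $k$ orthonormal bases—equivalently $\mathrm{tr}(p_i^a p_j^a) = \delta_{ij}$ within each class—and satisfy the cross-class relation above. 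This pins down the object set; the remaining content is checking that it is nonempty (which the running example guarantees) and closed under the ambient operations.

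For the morphisms I would take the restriction of the arrow-category morphisms of $\mathbf{QDesign}_{\mathrm{RU}}$ to those pairs $(\phi_A,\phi_D)$ lying in the image of the faithful functor $\mathbf{FSet} \hookrightarrow \mathbf{Mat}(\mathbb{N}) \cong \mathbf{CP_c}[\mathbf{FHilb}] \subseteq \mathbf{CP}[\mathbf{FHilb}]$, exactly as in the $\mathbf{BDesign}[\mathcal{C},\mathbf{FSet}]$ construction. Since the source object $\mathbb{C}^{dk}$ is a classical structure, a function on the index set of projectors induces such a $\phi_A$, and the commuting square of Definition~\ref{def. arrow category} forces a matching function $\phi_D$ on the underlying $d$-dimensional space. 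I would check that identities are of this form and that the composite of two pairs of functions is again a pair of functions—immediate, since functions compose to functions—so that these data assemble into a subcategory.

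The hard part will be the third step: verifying that a morphism of the underlying MUB collections is genuinely captured by a pair of honest functions rather than by a more general completely positive map, and that this restricted morphism class is closed under composition while still relating the designs correctly. In particular one must confirm that the commuting-square condition, together with the rank-one and unbiasedness constraints, does not admit morphisms outside the image of $\mathbf{FSet}$. Pinning down this morphism class precisely, so that $\mathbf{MUB}$ is a bona fide subcategory and not merely a collection of objects, is where the real care is required.
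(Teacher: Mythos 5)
Your construction is essentially the paper's own proof: the paper defines the same CP-map $M:\mathbb{C}^{k\cdot d}\cong\mathbb{C}^{d}\otimes\mathbb{C}^{k}\to\mathbb{C}^{d}\otimes\mathbb{C}^{d}$ sending basis vectors to the projectors $p_i^a$, verifies $k$-uniformity and $1$-regularity, records your unbiasedness condition as a generalised $\lambda$-equation with $\Lambda=\{\tfrac{1}{d},0\}$, and then restricts $\mathbf{QDesign}_{\mathrm{RU}}$ to objects of this form with morphisms given by pairs of functions. The one correction: the ``hard part'' you flag at the end is a non-issue, because the theorem asserts a subcategory, not a \emph{full} subcategory --- you may restrict the morphism class to pairs of functions by fiat, and then the only obligations are the identity and composition-closure checks you already dispatched in your third paragraph; there is no need to show that the commuting square together with the unbiasedness constraints excludes more general completely positive maps.
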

\begin{proof}
Consider the CP-map $M : \mathbb{C}^{k\cdot d} \cong \mathbb{C}^{d}\otimes \mathbb{C}^{k} \to \mathbb{C}^{d} \otimes \mathbb{C}^{d}$:
$$
M \quad=\quad
\sum_{a=1}^k \sum_{i=1}^d \quad
\begin{aligned}
\begin{tikzpicture}[thick]
\draw  (0,2) to +(0,1) ;
\node [widthone] at (0,2) {$i$};
\draw  (1,2) to +(0,1) ;
\node [widthone] at (1, 2) {$a$};
\draw (1,0) to +(0,1.25);
\draw (0,0) to +(0,1.25);
\node [widthtwo] at (0.5,1.25) {$d_{{\mathbb C} ^d}$};
\node [widthone] at (1,0.5) {$p_i^a$};
\end{tikzpicture}
\end{aligned}
$$
This map satisfies the following equations:
\begin{align*}
\begin{aligned}
\begin{tikzpicture}[thick, yscale=.8]
\node [dot] at (0,2) {};
\node [dot] at (1,2) {};
\draw (0,0) to +(0,2);
\draw (1,0) to +(0,2);
\node [widthtwo] at (.5,1) {$M$};
\end{tikzpicture}
\end{aligned}
&\quad=\quad k \,\,\,\,
\begin{aligned}
\begin{tikzpicture}[thick]
\draw (0,0) to +(0,1);
\draw (1,0) to +(0,1);
\node [widthtwo] at (.5,1) {$d_{\mathbb C ^d}$};
\node [dot, white] at (1,0) {};
\end{tikzpicture}
\end{aligned}
&
\begin{aligned}
\begin{tikzpicture}[thick, yscale=.8]
\draw (0,0) to +(0,2);
\draw (1,0) to +(0,2);
\node [widthtwo] at (.5,1) {$M$};
\node [widthtwo] at (.5,0) {$b_{\mathbb C ^d}$};
\end{tikzpicture}
\end{aligned}
&\quad=\quad
\begin{aligned}
\begin{tikzpicture}[thick]
\draw (0,0) to +(0,2);
\draw (1,0) to +(0,2);
\node [dot] at (1,0) {};
\node [dot] at (0,0) {};
\end{tikzpicture}
\end{aligned}
\end{align*}
$$
\begin{aligned}
\begin{tikzpicture}[thick, yscale=.8]
\draw (-.5,.25) to +(0,2.5);
\draw (.5,.25) to +(0,2.5);
\node [widthtwo] at (0,1) {$M^\dag$};
\node [widthtwo] at (0,2) {$M$};
\end{tikzpicture}
\end{aligned}
\quad=\quad
\frac 1 d  \left(\,\,
\begin{aligned}
\begin{tikzpicture}[thick, xscale=.7, yscale=.8]
\draw (-.5,.25) to +(0,.75);
\draw (.5,.25) to +(0,.75);
\draw (-.5,2) to +(0,.75);
\draw (.5,2) to +(0,.75);
\node [dot] at (-.5,1) {};
\node [dot] at (.5,1) {};
\node [dot] at (-.5,2) {};
\node [dot] at (.5,2) {};
\end{tikzpicture}
\end{aligned}
\quad-\quad
\begin{aligned}
\begin{tikzpicture}[thick, xscale=.7, yscale=.8]
\draw (-.5,.25) to +(0,.75);
\draw (-.5,2) to +(0,.75);
\draw (.5,.25) to +(0,2.5);
\node [dot] at (-.5,1) {};
\node [dot] at (-.5,2) {};
\end{tikzpicture}
\end{aligned}
\,\,\,
\right)
\quad+\quad
\begin{aligned}
\begin{tikzpicture}[thick, xscale=.7, yscale=.8]
\draw (-.5,.25) to +(0,2.5);
\draw (.5,.25) to +(0,2.5);
\end{tikzpicture}
\end{aligned}
$$
Here the last equation can be understood as a generalised $\lambda$-equation. Restricting $\mathbf{QDesign}_{RU}$ to objects of this form, we get a category that has objects that are $1$-uniform and $k$-regular quantum designs of degree 2 where $\Lambda = \{ \frac{1}{d}, 0 \}$, i. e. $k$ MUBs in dimension $d$.
\end{proof}
\subsection{Relating \textbf{BDesign} to $\mathbf{QDesign}$} \label{sec. results - functor}
In this section we will construct a functor between $\mathbf{BDesign}[\mathbf{Mat}(\mathbb{N}), \mathbf{FSet}]$  and $\mathbf{QDesign}_{B}$. 
\begin{satz}\label{satz: existence functor between 2design and qdesign}
There exists a functor $Q: \mathbf{BDesign}[\mathbf{Mat}(\mathbb{N}), \mathbf{FSet}] \to \mathbf{QDesign}_{B}$ that relates a generalized balanced incomplete block designs to uniform, regular and $\lambda$-balanced quantum designs.
\end{satz}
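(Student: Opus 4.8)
The plan is to obtain $Q$ as a composite of two functors, so that no genuinely new combinatorial work is required: first the equivalence furnished by Lemma~\ref{lem. equivalence BDesignMAtN BDesign CPcFhilb}, and then a functor induced by the subcategory inclusion of classical structures into all special symmetric dagger Frobenius structures. Concretely, I would first apply Lemma~\ref{lem. equivalence BDesignMAtN BDesign CPcFhilb} to replace the source category by the equivalent $\mathbf{BDesign}[\mathbf{CP_c}[\mathbf{FHilb}], \mathbf{FSet}]$. Under the equivalence $\mathbf{Mat}(\mathbb{N}) \cong \mathbf{CP_c}[\mathbf{FHilb}]$ of Proposition~\ref{prop. equivalence CPc FHilb and MatN}, an incidence matrix $\chi: b \to v$ becomes the CP-map between classical structures given by the same matrix in the computational basis, and the pointed structure of Example~\ref{ex. pointed structures}(ii) (the all-ones column vector) is carried to the adjoint-of-trace pointed structure of Example~\ref{ex. pointed structures}(i); on a classical structure on $\mathbb{C}^n$ the latter is again the all-ones vector, namely the algebra unit $\mathbb{I}_{\mathbb{C}^n}$. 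This matching of pointed structures is precisely what guarantees that the regularity, uniformity and $\lambda$-balance equations of Definitions~\ref{def. RUDesign construction} and~\ref{def. BDesign construction} survive the translation unchanged.

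Second, I would invoke the inclusion $J: \mathbf{CP_c}[\mathbf{FHilb}] \hookrightarrow \mathbf{CP}[\mathbf{FHilb}]$: every classical structure is in particular a special symmetric dagger Frobenius structure, and every CP-map between classical structures is a CP-map, so $J$ is a faithful monoidal dagger functor that moreover preserves the pointed structure. By functoriality of the arrow-category construction (Theorem~\ref{thrm. equivalence arrow categories}) it induces $\mathbf{Arr}[J]$. Since the three design conditions are expressed purely through $p$, $p^\dagger$ and the dagger — regularity as $f \circ p_A = r\,p_D$, uniformity as $p_D^\dagger \circ f = k\,p_A^\dagger$, and $\lambda$-balance as $f \circ f^\dagger = \lambda\,(p_D \circ p_D^\dagger - \mathrm{id}_D) + r\,\mathrm{id}_D$ — a pointed dagger functor preserves all of them, including the degree-$1$ ($\lambda$-balance) condition and not merely regularity and uniformity. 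Hence $\mathbf{Arr}[J]$ restricts to a functor $\mathbf{BDesign}[\mathbf{CP_c}[\mathbf{FHilb}], \mathbf{FSet}] \to \mathbf{BDesign}[\mathbf{CP}[\mathbf{FHilb}]] = \mathbf{QDesign}_{\mathrm B}$. The image of $\chi$ is then a commutative quantum design, which by Theorem~\ref{thrm equivalence designs} is exactly the classical block design $\chi$ realised by the diagonal projectors $\mathrm{diag}(\chi_{i,1},\dots,\chi_{i,b})$, recovering Zauner's correspondence. On morphisms a pair of functions is sent to the corresponding pair of CP-maps, a legitimate morphism of $\mathbf{QDesign}_{\mathrm B}$ since there $F = \mathrm{id}$ imposes no restriction. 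Defining $Q$ to be the composite of these two functors then yields the claim, functoriality being immediate because each constituent acts as the identity on the underlying pairs of morphisms.

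I expect the main obstacle to be the verification in the first step that the two pointed structures genuinely agree under the equivalence, i.e.\ that the all-ones vector of $\mathbf{Mat}(\mathbb{N})$ is identified, through Proposition~\ref{prop. equivalence CPc FHilb and MatN} and $J$, with the adjoint of the trace in $\mathbf{CP}[\mathbf{FHilb}]$. Everything downstream rests on this identification: once it is secured the design equations transfer verbatim and the remaining content is bookkeeping about arrow categories. A secondary check is confirming that $J$ lands inside $\mathbf{QDesign}_{\mathrm B}$ and not merely $\mathbf{QDesign}_{\mathrm{RU}}$, which again follows directly from preservation of the pointed structure and the dagger.
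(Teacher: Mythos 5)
Your construction does yield a well-defined functor into $\mathbf{QDesign}_{\mathrm B}$, and your identification of the pointed structures is correct: under the equivalence of Proposition~\ref{prop. equivalence CPc FHilb and MatN} the all-ones column of $\mathbf{Mat}(\mathbb{N})$ corresponds to the algebra unit of the classical structure on $\mathbb{C}^n$, which is also what the adjoint-of-trace pointed structure of $\mathbf{CP}[\mathbf{FHilb}]$ restricts to; since the inclusion $\mathbf{CP_c}[\mathbf{FHilb}] \hookrightarrow \mathbf{CP}[\mathbf{FHilb}]$ preserves composition, dagger, scalars and points, the regularity, uniformity and $\lambda$-balance equations of Definitions~\ref{def. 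RUDesign construction} and~\ref{def. BDesign construction} transfer verbatim, and the bare existence claim follows. (Minor point: the induced functor on arrow categories is Proposition~\ref{prop. functor between arrow cat.}, not Theorem~\ref{thrm. equivalence arrow categories}, which concerns equivalences.) However, this is a genuinely different functor from the paper's, and the sentence in which you claim to recover Zauner's correspondence is false for your construction. Your $Q$ sends $\chi$ to itself, a CP-map $\mathbb{C}^{b} \to \mathbb{C}^{v}$ between \emph{commutative} algebras. Such an object does not encode a family of projectors: in the paper's encoding (Example~\ref{ex. POVM CPM}, Theorem~\ref{thrm Zauner cat}) a quantum design in Zauner's sense is a CP-map whose domain is the \emph{matrix} algebra $\mathbb{C}^{b} \otimes \mathbb{C}^{b}$, and for $b>1$ the commutative algebra $\mathbb{C}^{b}$ is not isomorphic to $\mathbb{C}^{b} \otimes \mathbb{C}^{b}$ in $\mathbf{CP}[\mathbf{FHilb}]$, so your image objects are not, even up to isomorphism, designs ``realised by diagonal projectors''. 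Your functor is essentially the subcategory inclusion; it proves the letter of the statement but not its intended content, which is to land on Zauner-type quantum designs in accordance with Theorem~\ref{thrm equivalence designs}.

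The missing idea is the paper's key step: precompose with the Cayley embedding, i.e.\ the multiplication $\mu : \mathbb{C}^{b} \otimes \mathbb{C}^{b} \to \mathbb{C}^{b}$ of the classical structure (itself a CP-map, the decoherence map), and define $Q(\chi) = \chi \circ \mu : \mathbb{C}^{b} \otimes \mathbb{C}^{b} \to \mathbb{C}^{v}$. The dagger of this composite sends the basis vector $\ket{i}$ to the diagonal projector whose diagonal is the $i$-th row of $\chi$, so the image genuinely is the commutative quantum design associated to $\chi$ by Theorem~\ref{thrm equivalence designs}; the design equations for $\chi \circ \mu$ then follow from those for $\chi$ together with specialness of the Frobenius algebra ($\mu \circ \Delta = \mathrm{id}$). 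This step also creates the one piece of work on morphisms that your route avoids: a morphism $(\xi', \xi)$ of block designs must be sent to $(\xi' \otimes \xi', \xi)$, and commutativity of the extra square $\mu \circ (\xi' \otimes \xi') = \xi' \circ \mu$ holds precisely because $\xi'$ is a function and hence a monoid homomorphism between classical structures. In short: your argument establishes existence of \emph{a} functor, but to prove the proposition as the paper intends it you need the $\mu$-precomposition, and your assertion that your functor already produces the diagonal-projector designs is the step that fails.
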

\begin{proof}
According to Lemma~\ref{lem. equivalence BDesignMAtN BDesign CPcFhilb} the categories $\mathbf{BDesign}[\mathbf{Mat}(\mathbb{N}), \mathbf{FSet}]$ and $\mathbf{BDesign}[\mathbf{CP}_{c}[\mathbf{FHilb}], \mathbf{FSet}]$ are equivalent. So we can actually represent an arbitrary object $\chi: b\to v$ in $\mathbf{BDesign}[\mathbf{Mat}(\mathbb{N}), \mathbf{FSet}]$ via a uniform, regular and $\lambda$-balanced CP-map  $\chi: \mathbb{C}^{b} \to \mathbb{C}^{v}$. The functor $Q$ acts on objects by sending each object $\chi: \mathbb{C}^{b} \to \mathbb{C}^{v}$ in $\mathbf{BDesign}[\mathbf{Mat}(\mathbb{N}), \mathbf{FSet}]$ with parameters $k$, $r$ and $\lambda$ to the map
$\phi=  \chi \circ L : \mathbb{C}^{b} \otimes\mathbb{C}^{b} \to \mathbb{C}^{b} \to \mathbb{C}^{v}$, where the map $L: \mathbb{C}^{b} \otimes \mathbb{C}^{b} \to \mathbb{C}^{b}$ is the so-called Cayley embedding, which in our case simply becomes the multiplication $\mu: \mathbb{C}^{b} \otimes \mathbb{C}^{b} \to \mathbb{C}^{b}$ as we have that $A=\mathbb{C}^{b}\cong (\mathbb{C}^{b})^{*}= A^{*}$. Its conjugate $L^{\dagger}$ is just the comultiplication $\Delta: \mathbb{C}^{b} \to \mathbb{C}^{b} \otimes \mathbb{C}^{b}$.
The resulting map $\phi$ is as concatenation of completely positive maps also completely positive. We  depict this via the following string diagram:
$$
\begin{aligned}
\begin{tikzpicture}[thick, yscale=.8]
\draw (0,1) to +(0,2);
\draw (-.5,0) to (-.5,1);
\draw (.5,0) to (.5,1);
\node [widthtwo] at (0,1) {$\mu$};
\node [widthtwo] at (0,2) {$\chi$};
\end{tikzpicture}
\end{aligned}
\quad=\quad
\begin{aligned}
\begin{tikzpicture}[thick]
\draw (0,0) to +(0,1.25);
\draw (-.5,0) to +(0,-1.25);
\draw (.5,0) to +(0,-1.25);
\node [widthtwo] at (0,0) {$\phi$};
\end{tikzpicture}
\end{aligned}
$$
Via concatenation, each morphism in $\mathbf{BDesign}[\mathbf{Mat}(\mathbb{N}), \mathbf{FSet}]$
\[\begin{tikzcd}
\mathbb{C}^{b}\arrow{d}{\chi} \arrow{r}{\xi'} & \mathbb{C}^{b'} \arrow{d}{\chi'}\\
\mathbb{C}^{v} \arrow{r}{\xi} & \mathbb{C}^{v'}
\end{tikzcd}
\]
gets mapped to a morphism in $\mathbf{QDesign}_{B}$, as follows:
\[\begin{tikzcd}
\mathbb{C}^{b}\otimes \mathbb{C}^{b} \arrow{d}{\mu} \arrow{r}{\xi'\otimes \xi '} & \mathbb{C}^{b'} \otimes\mathbb{C}^{b'}\arrow{d}{\mu}\\
\mathbb{C}^{b}\arrow{d}{\chi}\arrow{r}{\xi'} &  \mathbb{C}^{b'}\arrow{d}{\chi'}\\
\mathbb{C}^{v} \arrow{r}{\xi} & \mathbb{C}^{v'} 
\end{tikzcd}
\] 
This diagram commutes, because $\xi'$ can be extended to a morphism of monoids as $\xi'$ is a function. It is easy to verify that this functor respects composition and sends the identity morphism in 
$\mathbf{BDesign}[\mathbf{Mat}(\mathbb{N}), \mathbf{FSet}]$, i. e. $\mathrm{id}_{\psi}=(\mathrm{id}, \mathrm{id})$, to the identity morphism $\mathrm{id}_{Q(\psi)}=(\mathrm{id}\otimes \mathrm{id}, \mathrm{id}\otimes \mathrm{id})$ in $\mathbf{QDesign}_{\mathrm B}$. 
The regularity condition then becomes:
$$
k\,\,\,
\begin{aligned}
\begin{tikzpicture}[thick]
\draw (-.5,-1) to +(0,1);
\draw (.5,-1) to +(0,1);
\node [widthtwo] at (0,0) {$d_{\mathbb C ^b}$};
\node [dot,white] at (0,-1) {};
\end{tikzpicture}
\end{aligned}
\quad=\quad
k\,\,\,
\begin{aligned}
\begin{tikzpicture}[thick]
\draw (-.5,-1) to +(0,1);
\draw (.5,-1) to +(0,1);
\draw (0,0) to +(0,1);
\node [widthtwo] at (0,0) {$\mu$};
\node [dot] at (0,1) {};
\end{tikzpicture}
\end{aligned}
\quad=\quad
\begin{aligned}
\begin{tikzpicture}[thick, yscale=.8]
\draw (-.5,-1) to +(0,1);
\draw (.5,-1) to +(0,1);
\draw (0,1) to +(0,1);
\draw (0,0) to +(0,1);
\node [widthone] at (0,1) {$\chi$};
\node [widthtwo] at (0,0) {$\mu$};
\node [dot] at (0,2) {};
\end{tikzpicture}
\end{aligned}
\quad=\quad
\begin{aligned}
\begin{tikzpicture}[thick]
\draw (-.5,-1) to +(0,1);
\draw (.5,-1) to +(0,1);
\draw (0,0) to +(0,1);
\node [widthtwo] at (0,0) {$\phi$};
\node [dot] at (0,1) {};
\end{tikzpicture}
\end{aligned}
$$
which is exactly the regularity condition in $\mathbf{QDesign}_{\mathrm B}$. Note that we have used the fact that $\mathbb{C}^{b}$ is a special Frobenius algebra in the first step.
in the second step. For uniformity we find:
$$
r\,\,\,
\begin{aligned}
\begin{tikzpicture}[thick]
\draw (0,-1) to (0,1);
\node [dot] at (0,-1) {};
\end{tikzpicture}
\end{aligned}
\quad=\quad
\begin{aligned}
\begin{tikzpicture}[thick, yscale=.8]
\draw (0,-1) to +(0,1);
\draw (-.5,0) to +(0,1);
\draw (.5,0) to +(0,1);
\draw (0,1) to +(0,2);
\node [widthone] at (0,2) {$\chi$};
\node [widthtwo] at (0,0) {$\Delta$};
\node [widthtwo] at (0,1) {$\mu$};
\node [dot] at (0,-1) {};
\end{tikzpicture}
\end{aligned}
\quad=\quad
\begin{aligned}
\begin{tikzpicture}[thick]
\draw (0,0) to (0,1);
\draw (-.5,1) to +(0,1);
\draw (.5,1) to +(0,1);
\draw (0,2) to +(0,1);
\node [widthtwo] at (0,2) {$\phi$};
\node [widthtwo] at (0,1) {$\Delta$};
\node [dot] at (0,0) {};
\end{tikzpicture}
\end{aligned}
\quad=\quad
\begin{aligned}
\begin{tikzpicture}[thick]
\draw (0,2) to (0,3);
\draw (-.5,1) to +(0,1);
\draw (.5,1) to +(0,1);
\node [widthtwo] at (0,2) {$\phi$};
\node [widthtwo] at (0,1) {$b_{\mathbb C ^b}$};
\end{tikzpicture}
\end{aligned}
$$
which is precisely the uniformity condition in $\mathbf{QDesign}_{\mathrm B}$.
In a similar way one can verify that the $\lambda$\-condition in $\mathbf{BDesign}[\mathbf{Mat}(\mathbb{N}), \mathbf{FSet}]$ gets mapped to the $\lambda$-condition in $\mathbf{QDesign}_{\mathrm B}$.
\end{proof}
In this construction every classical design gives rise to a uniform and regular quantum design of degree 1, analogously to Theorem \ref{thrm equivalence designs}. However, it is straightforward to verify that the functor $Q$ does not yield an equivalence of categories, as it is not essentially surjective.

A widely-discussed topic is the existence of MUBs in non-primepower dimensions. One can ask if it is possible to extend the functor $Q$ to a functor $\widetilde{Q}: \mathbf{BDesign}[\mathbf{Mat}(\mathbb{N}), \mathbf{FSet}] \to \mathbf{MUB}$ that maps a classical design to a set of MUBs. We conjecture that there does not exist a classical design that gets sent to a $\mathbf{MUB}$ via $Q$. 

\newpage

\appendix

\section{Arrow categories} \label{app. structures in arrowcategories}
In this section we will review the concept of arrow categories and derive some facts about arrow categories which we believe to be new. We will focus only on content relevant for the overall purpose of this paper. However, there are more results on that topic, covering Hopf algebras, Frobenius structures and topological field theories in arrow categories which can be found in~\cite{goedicke2023structures}.

\begin{definition}[See \cite{IntroCat}, pages 23-24]\label{def. arrow category}
For a category \C, its arrow category $\mathbf{Arr}[\C]$ is defined as follows:
\begin{itemize}
    \item objects are triples $(A, B, h)$ with $h: A\to B$ in \C;
    \item morphisms $\phi: (A, B, h) \to (A', B', h')$ are pairs of morphisms $\phi_{A}: A \to A'$ and $\phi_{B}: B \to B'$ in $\mathcal{C}$ such that the following diagram commutes in \C:
    \[\begin{tikzcd}
    A \arrow[swap]{d}{h} \arrow{r}{\phi_A} &     \arrow{d}{h'} A' \\
    B \arrow{r}{\phi_B} & B'
\end{tikzcd}
\]
\end{itemize}
\end{definition}

In the following we will show that an arrow category inherits certain structures from their underlying category. This includes functors, natural transformations and the monoidal product.
\begin{satz}\label{prop. functor between arrow cat.}
 Given a functor $F: \mathcal{C} \to \mathcal{D}$, we apply the arrow construction to obtain a functor \mbox{$\widetilde F: \mathbf{Arr}[\C] \to \mathbf{Arr}[\mathcal D]$.}
\end{satz}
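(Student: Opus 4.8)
The plan is to construct $\widetilde{F}$ explicitly by applying $F$ componentwise, and then verify functoriality by leaning entirely on the functoriality of $F$ itself. On objects, I would define $\widetilde{F}(A, B, h) = (F(A), F(B), F(h))$, which is a legitimate object of $\mathbf{Arr}[\mathcal{D}]$ precisely because $F(h): F(A) \to F(B)$ is a morphism in $\mathcal{D}$ whenever $h: A \to B$ is a morphism in $\mathcal{C}$. On morphisms, given $\phi = (\phi_A, \phi_B): (A,B,h) \to (A',B',h')$, I would define $\widetilde{F}(\phi) = (F(\phi_A), F(\phi_B))$.

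The key step is to check that $\widetilde{F}(\phi)$ is genuinely a morphism in $\mathbf{Arr}[\mathcal{D}]$, i.e.\ that the relevant square commutes. The defining commutativity of $\phi$ is the equation $h' \circ \phi_A = \phi_B \circ h$ in $\mathcal{C}$. Applying $F$ to both sides and using that $F$ preserves composition gives $F(h') \circ F(\phi_A) = F(\phi_B) \circ F(h)$ in $\mathcal{D}$, which is exactly the commutativity of the square
\[\begin{tikzcd}
F(A) \arrow[swap]{d}{F(h)} \arrow{r}{F(\phi_A)} & F(A') \arrow{d}{F(h')} \\
F(B) \arrow{r}{F(\phi_B)} & F(B')
\end{tikzcd}\]
so $\widetilde{F}(\phi)$ is a valid morphism $(F(A),F(B),F(h)) \to (F(A'),F(B'),F(h'))$.

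It then remains to confirm that $\widetilde{F}$ respects identities and composition. For identities, $\mathrm{id}_{(A,B,h)} = (\mathrm{id}_A, \mathrm{id}_B)$, and $\widetilde{F}$ sends this to $(F(\mathrm{id}_A), F(\mathrm{id}_B)) = (\mathrm{id}_{F(A)}, \mathrm{id}_{F(B)})$ by functoriality of $F$, which is $\mathrm{id}_{\widetilde{F}(A,B,h)}$. For composition, since composition in $\mathbf{Arr}[\mathcal{C}]$ is defined componentwise, $\psi \circ \phi = (\psi_A \circ \phi_A, \psi_B \circ \phi_B)$, and $\widetilde{F}$ maps this to $(F(\psi_A \circ \phi_A), F(\psi_B \circ \phi_B)) = (F(\psi_A) \circ F(\phi_A), F(\psi_B) \circ F(\phi_B)) = \widetilde{F}(\psi) \circ \widetilde{F}(\phi)$, again using that $F$ preserves composition.

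I do not anticipate a serious obstacle here: the statement is essentially a bookkeeping argument, and every nontrivial assertion reduces to the two functor axioms for $F$. The only point requiring genuine attention is the commutativity verification in the second paragraph, which is where the hypothesis that morphisms of $\mathbf{Arr}[\mathcal{C}]$ are \emph{commuting} squares is actually used; the rest is routine. If one later wishes to refine $\widetilde{F}$ to a monoidal or dagger functor, more care would be needed, but for the bare functoriality claimed in the statement the componentwise construction suffices.
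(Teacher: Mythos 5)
Your proposal is correct and follows essentially the same approach as the paper: define $\widetilde{F}$ componentwise, verify that the defining square of a morphism in $\mathbf{Arr}[\mathcal{C}]$ is carried to a commuting square in $\mathbf{Arr}[\mathcal{D}]$ by applying $F$ to the equation $h' \circ \phi_A = \phi_B \circ h$, and then check preservation of identities and composition from the functor axioms for $F$. Your write-up is, if anything, slightly more explicit than the paper's (which appeals to ``concatenation of commuting diagrams'' for composition), but there is no substantive difference.
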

\begin{proof}
Given a functor $F: \mathcal{C} \to \mathcal{D}$ , we can define a functor $\widetilde F : \mathbf{Arr}[\C] \to \mathbf{Arr}[\mathcal D]$ as follows. On objects, we map $f : A\to B$ in $\mathbf{Arr}[\C]$ to an object $F(f): F(A) \to F(B)$ in $\mathbf{Arr}[\mathcal{D}]$. On morphisms, we map  $(\phi, \psi): f \to f'$ in $\mathbf{Arr}[\C]$ to a morphism $\tilde{F}(\phi, \psi)=(F(\phi), F(\psi)): F(f) \to F(f')$ in $\mathbf{Arr}[\mathcal{D}]$. This is valid because the diagram
\[\begin{tikzcd}
    F(A) \arrow{r}{F(\phi)} \arrow[swap]{d}{F(f)} &  \arrow{d}{ F(f')} F(A') \\
    F(B) \arrow{r}{F(\psi)} & F(B')
\end{tikzcd}
\]
commutes due to functoriality of $F$. Moreover, we have
\begin{align}
    \widetilde{F}(\mathrm{id}_{A}, \mathrm{id}_{B}) = (F(\mathrm{id}_{A}), F(\mathrm{id}_{B})) =(\mathrm{id}_{F(A)}, \mathrm{id}_{F(B)}),
\end{align} where $(\mathrm{id}_{A}, \mathrm{id}_{B})$ is the identity morphism in $\mathbf{Arr}[\C]$. Due to functoriality of $F$ and because the concatenation of two commuting diagrams yields again a commuting diagram, $\tilde{F}$ also preserves composition.
\end{proof}\noindent
Similarly, a contravariant functor $F: \mathcal{C} \to \mathcal{D}$ gives rise to a contravariant functor $\widetilde{F}: \mathbf{Arr}[\mathcal{C}] \to \mathbf{Arr}[\mathcal{D}]$.
\begin{satz}\label{prop. natural trafo between arrow cats}
Let $F,G:\mathcal C \to \mathcal D$ be two functors between two categories $\C$ and $\mathcal D$, and let $\widetilde{F}, \widetilde{G}: \mathbf{Arr}[\C] \to \mathbf{Arr}[\mathcal D]$ be the induced functors on the arrow categories. A natural transformation \mbox{$\eta: F \Rightarrow G$} induces a natural transformation $\tilde \eta: \tilde F \Rightarrow \tilde G$.
\end{satz}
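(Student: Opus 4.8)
The plan is to define the components of $\tilde\eta$ directly from those of $\eta$ and then verify the two conditions that make this assignment a natural transformation. Recall that the induced functors act by $\tilde F(f) = F(f)$ and $\tilde G(f) = G(f)$ on an object $f : A \to B$ of $\mathbf{Arr}[\C]$, and componentwise on morphisms (Proposition~\ref{prop. functor between arrow cat.}). For each such $f$ I would set $\tilde\eta_f := (\eta_A, \eta_B)$, proposing it as a morphism $\tilde F(f) \to \tilde G(f)$ in $\mathbf{Arr}[\mathcal D]$, where $\eta_A : F(A) \to G(A)$ and $\eta_B : F(B) \to G(B)$ are the corresponding components of $\eta$.

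First I would check that $(\eta_A, \eta_B)$ is genuinely a morphism of $\mathbf{Arr}[\mathcal D]$, i.e.\ that the square with top and bottom edges $\eta_A$ and $\eta_B$ and vertical edges $F(f)$ and $G(f)$ commutes. This commutativity, $G(f) \circ \eta_A = \eta_B \circ F(f)$, is exactly the naturality square of $\eta$ evaluated at the arrow $f : A \to B$ itself. This is the one point that deserves attention: the components of $\tilde\eta$ are themselves required to be arrow-category morphisms, and their well-definedness is precisely an instance of the naturality of $\eta$ applied to the ``diagonal'' map $f$ that constitutes the object of $\mathbf{Arr}[\C]$.

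Next I would verify the naturality of $\tilde\eta$ itself. Given a morphism $(\phi, \psi) : f \to f'$ in $\mathbf{Arr}[\C]$, with $f : A \to B$, $f' : A' \to B'$, $\phi : A \to A'$ and $\psi : B \to B'$, the required equation is $\tilde G(\phi,\psi) \circ \tilde\eta_f = \tilde\eta_{f'} \circ \tilde F(\phi,\psi)$ in $\mathbf{Arr}[\mathcal D]$. Since composition in an arrow category is computed componentwise, the left-hand side is $(G(\phi)\circ\eta_A,\; G(\psi)\circ\eta_B)$ and the right-hand side is $(\eta_{A'}\circ F(\phi),\; \eta_{B'}\circ F(\psi))$. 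Equating the two components reduces the claim to the pair of equations $G(\phi)\circ\eta_A = \eta_{A'}\circ F(\phi)$ and $G(\psi)\circ\eta_B = \eta_{B'}\circ F(\psi)$, which are just the naturality squares of $\eta$ at $\phi$ and at $\psi$ respectively.

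I do not expect a genuine obstacle here: every verification collapses to an instance of the naturality of $\eta$, together with the observation that the arrow-category structure (composition, identities) is inherited componentwise from $\mathcal D$. The only subtlety worth flagging is that naturality of $\eta$ is invoked in two distinct roles --- once to make each component $\tilde\eta_f$ a well-defined morphism of $\mathbf{Arr}[\mathcal D]$, and once to establish the naturality squares of $\tilde\eta$ --- so it is worth stating clearly which instance is used where.
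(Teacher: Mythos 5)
Your proposal is correct and follows essentially the same route as the paper: the components are defined as $\tilde\eta_f = (\eta_A,\eta_B)$, their well-definedness as arrow-category morphisms is exactly the naturality square of $\eta$ at $f$, and the naturality of $\tilde\eta$ reduces to the naturality squares of $\eta$ at $\phi$ and $\psi$. The only difference is presentational — you argue componentwise equationally, while the paper assembles the same squares into a commuting cube diagram.
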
 
\begin{proof}
Let $\eta: F \Rightarrow G$ be a natural transformation that assigns to every object $A$ in $\mathcal{C}$ a morphism $\eta_{A}: F(A) \to G(A)$, such that for any morphism $f:A \to B$ in $\mathcal{C}$ the following diagram (naturality condition) commutes:
\[\begin{tikzcd}
    F(A) \arrow{r}{\eta_{A}} \arrow[swap]{d}{F(f)} &  \arrow{d}{ G(f)} G(A) \\
    F(B) \arrow{r}{\eta_{B}} & G(B)
\end{tikzcd}
\]
One can use the naturality of $\eta$ to define a natural transformation $\widetilde{\eta}: \widetilde{F} \Rightarrow \widetilde{G} $ that assigns to every object $f: A \to B$ in $\mathbf{Arr}[\C]$ a morphism $\widetilde{\eta}_{f}=(\eta_{A}, \eta_{B}): \widetilde{F}(f) \to \widetilde{G}(f)$ via the commutative diagram from above, such that for any morphism $(\phi, \psi): f \to f'$ in $\mathbf{Arr}[\C]$: 
\[\begin{tikzcd}
   A \arrow{r}{\phi} \arrow[swap]{d}{f} &  \arrow{d}{ f'} A' \\
   B \arrow{r}{\psi} & B'
\end{tikzcd}
\]
the following diagram (naturality condition in the arrow category) commutes:
\[\begin{tikzcd}[row sep=1cm, column sep=1cm, inner sep=2ex]
&F(A) \arrow[swap]{dl}{F(\phi)}\arrow{rr}{\eta_{A}} \arrow[dashed]{dd}[yshift=-0.5cm]{F(f)} & & G(A) \arrow[swap]{dl}{G(\phi)} \arrow{dd}{G(f)} \\
F(A') \arrow{rr}[xshift= 0.5cm]{\eta_{A'}} \arrow{dd} {F(f')} & & G(A') \arrow[]{dd} [yshift=0.5cm]{G(f')}\\
&F(B) \arrow[dashed]{dl}{F(\psi)} \arrow[dashed] {rr}[xshift= -0.5cm]{\eta_{B}} & & G(B) \arrow{dl}{G (\psi)} \\
F(B') \arrow[swap]{rr}{\eta_{B'}} & & G (B')\\
\end{tikzcd}
\]
Here the the top, the back, the front and the bottom face commute due to naturality of $\eta$ and the two side faces commute by definition. Hence the whole diagram commutes and we have defined a natural transformation $\widetilde{\eta}: \widetilde{F} \Rightarrow \widetilde{G}$.
\end{proof}\noindent
\begin{satz}\label{rem. natural isomoprhism between arrow cats}
  If $\eta: F \Rightarrow G$ is a natural isomorphism, then so is $\Tilde{\eta}: \widetilde{F} \Rightarrow \widetilde{G}$.  
\end{satz}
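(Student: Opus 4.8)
The plan is to combine the standard fact that a natural transformation is a natural isomorphism precisely when each of its components is an isomorphism, with the componentwise characterization of isomorphisms in an arrow category. First I would recall that, since $\eta$ is a natural isomorphism, every component $\eta_A : F(A) \to G(A)$ is an isomorphism in $\mathcal{D}$, with two-sided inverse $\eta_A^{-1}$. By Proposition~\ref{prop. natural trafo between arrow cats}, the induced transformation $\widetilde{\eta}$ has component at an object $f : A \to B$ given by the pair $\widetilde{\eta}_f = (\eta_A, \eta_B) : \widetilde{F}(f) \to \widetilde{G}(f)$. It therefore suffices to exhibit, for each such $f$, a two-sided inverse to $(\eta_A, \eta_B)$ in $\mathbf{Arr}[\mathcal{D}]$.

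The natural candidate is the pair $(\eta_A^{-1}, \eta_B^{-1})$. Because composition in $\mathbf{Arr}[\mathcal{D}]$ is computed componentwise, this pair is automatically a two-sided inverse to $(\eta_A, \eta_B)$ at the level of the underlying morphisms of $\mathcal{D}$. The one point that genuinely needs checking is that $(\eta_A^{-1}, \eta_B^{-1})$ is a \emph{well-defined} morphism $\widetilde{G}(f) \to \widetilde{F}(f)$ in the arrow category, that is, that the square
\[
\begin{tikzcd}
G(A) \arrow{r}{\eta_A^{-1}} \arrow[swap]{d}{G(f)} & F(A) \arrow{d}{F(f)} \\
G(B) \arrow{r}{\eta_B^{-1}} & F(B)
\end{tikzcd}
\]
commutes, i.e. that $F(f) \circ \eta_A^{-1} = \eta_B^{-1} \circ G(f)$.

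This identity is simply the naturality square of $\eta$ inverted: starting from $G(f) \circ \eta_A = \eta_B \circ F(f)$ and composing with $\eta_A^{-1}$ on the right and $\eta_B^{-1}$ on the left yields exactly the required equation. Hence $(\eta_A^{-1}, \eta_B^{-1})$ is a morphism in $\mathbf{Arr}[\mathcal{D}]$ and is a two-sided inverse to $\widetilde{\eta}_f$, so every component of $\widetilde{\eta}$ is an isomorphism and $\widetilde{\eta}$ is therefore a natural isomorphism. The sole potential obstacle is precisely the verification that the componentwise inverses still respect the commutativity constraint of the arrow category; as just indicated, this is an immediate consequence of the invertibility of the naturality square, so no real difficulty arises and the proof is essentially a bookkeeping argument on top of Proposition~\ref{prop. natural trafo between arrow cats}.
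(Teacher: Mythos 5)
Your proof is correct. Note that the paper itself offers no proof of this proposition at all: it is stated bare, as an immediate consequence of Proposition~\ref{prop. natural trafo between arrow cats}, so there is no written argument to compare against — yours fills in exactly the details the authors left implicit. You use the standard characterization of natural isomorphisms as natural transformations whose components are all invertible: the component $\widetilde{\eta}_f = (\eta_A, \eta_B)$ has candidate inverse $(\eta_A^{-1}, \eta_B^{-1})$, and the only genuine check is that this pair is a legitimate morphism of $\mathbf{Arr}[\mathcal{D}]$, which you correctly obtain by inverting the naturality square of $\eta$. A marginally shorter route, closer in spirit to how the paper chains its propositions, is to apply Proposition~\ref{prop. natural trafo between arrow cats} directly to the natural transformation $\eta^{-1} : G \Rightarrow F$ (which exists because $\eta$ is a natural isomorphism), yielding $\widetilde{\eta^{-1}} : \widetilde{G} \Rightarrow \widetilde{F}$; since composition in $\mathbf{Arr}[\mathcal{D}]$ is componentwise, $\widetilde{\eta^{-1}}$ is then automatically a two-sided inverse of $\widetilde{\eta}$, with no further square-chasing needed because well-definedness of $\widetilde{\eta^{-1}}$ is already guaranteed by that proposition. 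Both arguments are sound and essentially equivalent; yours has the merit of making explicit the one point that could conceivably fail — that componentwise inverses respect the commutativity constraint of the arrow category — while the alternative delegates that verification to the previously proved result.
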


\begin{theorem}\label{thrm. equivalence arrow categories}
Let $\mathcal{C}$ and $\mathcal{D}$ be 
equivalent categories; that is, there exist  functors $F: \mathcal{C} \to \mathcal{D}$ and $G: \mathcal{D} \to \mathcal{C}$ and natural isomorphisms $F \circ G \cong \mathrm{id}_{\mathcal{D}}$ and $G \circ F \cong \mathrm{id}_{\mathcal{C}}$. Then Arr($\mathcal{C}$) and Arr($\mathcal{D}$) are also equivalent.
\end{theorem}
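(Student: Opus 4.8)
The plan is to leverage the three preceding propositions, which package exactly the functoriality of the arrow construction at the level of functors, natural transformations, and natural isomorphisms. An equivalence between $\mathcal{C}$ and $\mathcal{D}$ supplies functors $F: \mathcal{C} \to \mathcal{D}$ and $G: \mathcal{D} \to \mathcal{C}$ together with natural isomorphisms $\eta: G \circ F \cong \mathrm{id}_{\mathcal{C}}$ and $\epsilon: F \circ G \cong \mathrm{id}_{\mathcal{D}}$. By Proposition~\ref{prop. functor between arrow cat.} these functors lift to $\widetilde{F}: \mathbf{Arr}[\mathcal{C}] \to \mathbf{Arr}[\mathcal{D}]$ and $\widetilde{G}: \mathbf{Arr}[\mathcal{D}] \to \mathbf{Arr}[\mathcal{C}]$, and I would take these as the candidate functors exhibiting the equivalence of the arrow categories.

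First I would record that the assignment $F \mapsto \widetilde{F}$ is strictly functorial with respect to composition and identities. Directly from the definition in the proof of Proposition~\ref{prop. functor between arrow cat.}, one has $\widetilde{G \circ F} = \widetilde{G} \circ \widetilde{F}$, since both send an object $h$ to $G(F(h))$ and a morphism pair $(\phi,\psi)$ to $(G(F(\phi)), G(F(\psi)))$, and likewise $\widetilde{\mathrm{id}_{\mathcal{C}}} = \mathrm{id}_{\mathbf{Arr}[\mathcal{C}]}$. This is the bookkeeping step that lets the lifted natural isomorphisms land on the correct composite functors.

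Next I would apply Proposition~\ref{rem. natural isomoprhism between arrow cats} to the natural isomorphism $\eta: G \circ F \cong \mathrm{id}_{\mathcal{C}}$, obtaining a natural isomorphism $\widetilde{\eta}: \widetilde{G \circ F} \cong \widetilde{\mathrm{id}_{\mathcal{C}}}$. Rewriting the endpoints using the functoriality identities of the previous step turns this into a natural isomorphism $\widetilde{G} \circ \widetilde{F} \cong \mathrm{id}_{\mathbf{Arr}[\mathcal{C}]}$. Running the identical argument on $\epsilon$ gives $\widetilde{F} \circ \widetilde{G} \cong \mathrm{id}_{\mathbf{Arr}[\mathcal{D}]}$, and the two natural isomorphisms together witness the equivalence $\mathbf{Arr}[\mathcal{C}] \simeq \mathbf{Arr}[\mathcal{D}]$.

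I do not expect a genuine obstacle here, since the substance lives in the earlier propositions; the only point demanding care is the strict compatibility $\widetilde{G \circ F} = \widetilde{G} \circ \widetilde{F}$, which must be an equality, not merely an isomorphism, of functors in order for the endpoints of $\widetilde{\eta}$ to match on the nose. This holds because the arrow lift is defined componentwise by applying the underlying functor, so composition is preserved strictly; I would verify this explicitly on both objects and morphisms before invoking Proposition~\ref{rem. natural isomoprhism between arrow cats}.
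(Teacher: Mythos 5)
Your proposal is correct and follows essentially the same route as the paper: lift $F$ and $G$ via Proposition~\ref{prop. functor between arrow cat.}, then lift the natural isomorphisms via Proposition~\ref{rem. natural isomoprhism between arrow cats} to obtain the equivalence of arrow categories. Your explicit verification that $\widetilde{G \circ F} = \widetilde{G} \circ \widetilde{F}$ and $\widetilde{\mathrm{id}_{\mathcal{C}}} = \mathrm{id}_{\mathbf{Arr}[\mathcal{C}]}$ hold strictly is a point the paper leaves implicit, and it is a worthwhile addition.
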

\begin{proof}
By Proposition~\ref{prop. functor between arrow cat.} the functors $F: \mathcal{C} \to \mathcal{D}$ and $G: \mathcal{D} \to \mathcal{C}$ give rise to functors $\widetilde{F}: \mathbf{Arr}[\mathcal{C}] \to \mathbf{Arr}[\mathcal{D}]$ and $\widetilde{G}: \mathbf{Arr}[\mathcal{D}] \to \mathbf{Arr}[\mathcal{C}]$. From Proposition~\ref{rem. natural isomoprhism between arrow cats} we know that the natural isomorphisms $F \circ G \cong \mathrm{id}_{\mathcal{D}}$ and $G \circ F \cong \mathrm{id}_{\mathcal{C}}$ give rise to natural isomorphisms $\widetilde{F} \circ \widetilde{G} \cong \mathrm{id}_{\mathbf{Arr}[\mathcal{D}]}$ and $\widetilde{G} \circ \widetilde{F} \cong \mathrm{id}_{\mathbf{Arr}[\mathcal{C}]}$. Hence we have an equivalence.
\end{proof}
One can show that the same theorems apply to monoidal functors and monoidal natural transformations~\cite{goedicke2023structures}. Moreover, one can define a monoidal product in the arrow category of a monoidal category as the following proposition will show. However, this result is not necessarily new and can be in fact found in a similar notion in~\cite{white2018arrow}. 
\begin{satz}\label{prop. monoidal product in arrow cats}
For a monoidal category \C, we can define a monoidal product on $\mathbf{Arr}[\C]$, written $\boxtimes$, as follows:
\begin{itemize}
\item on objects, $f \boxtimes g := f \otimes g$;
\item on morphisms,  $(p, q) \boxtimes (p', q') := (p\otimes p', q\otimes q')$.
\end{itemize}
\end{satz}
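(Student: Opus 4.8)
The plan is to exhibit $\boxtimes$ as a bifunctor on $\mathbf{Arr}[\C]$ together with unit and coherence data transported componentwise from the monoidal structure of $\C$. First I would check that $\boxtimes$ is well defined on objects: for objects $f:A\to B$ and $g:C\to D$ of $\mathbf{Arr}[\C]$, the morphism $f\otimes g: A\otimes C\to B\otimes D$ is again an arrow of $\C$, hence an object of $\mathbf{Arr}[\C]$. For the action on morphisms, given $(p,q):f\to f'$ and $(p',q'):g\to g'$, I must show that $(p\otimes p', q\otimes q')$ is a legitimate morphism $f\boxtimes g\to f'\boxtimes g'$, i.e.\ that the square
\[
(f'\otimes g')\circ(p\otimes p') \;=\; (q\otimes q')\circ(f\otimes g)
\]
commutes in $\C$. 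This is exactly where bifunctoriality of $\otimes$ does the work: the interchange law rewrites the two sides as $(f'\circ p)\otimes(g'\circ p')$ and $(q\circ f)\otimes(q'\circ g)$, and these agree because the two defining squares of $(p,q)$ and $(p',q')$ commute. Preservation of identities and composition by $\boxtimes$ then follows directly from the corresponding properties of $\otimes$ applied in each component.

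Next I would supply the unit and coherence isomorphisms. The natural candidate for the monoidal unit is the object $\mathrm{id}_{\mathbb{I}}:\mathbb{I}\to\mathbb{I}$, where $\mathbb{I}$ is the unit of $\C$. The associator and the left and right unitors are defined componentwise: for objects $f:A\to B$, $g:C\to D$, $h:E\to F$ I would take the associator component to be the pair $(\alpha_{A,C,E},\alpha_{B,D,F})$, and analogously for the unitors built from $\lambda$ and $\rho$. The essential check is that each such pair really is a morphism of $\mathbf{Arr}[\C]$; the square that must commute for $(\alpha_{A,C,E},\alpha_{B,D,F})$ is precisely the naturality square of $\alpha$ evaluated at $f$, $g$, $h$ simultaneously, so it commutes by naturality of the associator in $\C$, and the unitor cases are identical. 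Naturality of these families in the arrow-category sense, and their invertibility, again reduce to the corresponding facts in $\C$ component by component.

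Finally I would verify the coherence conditions. The pentagon and triangle equations in $\mathbf{Arr}[\C]$ are equations between morphisms of $\mathbf{Arr}[\C]$, hence equations between pairs of morphisms of $\C$; since all the structure maps are defined componentwise, each coherence equation holds in $\mathbf{Arr}[\C]$ exactly when both of its components hold in $\C$, which they do because $\C$ is monoidal.

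I do not expect a genuine obstacle, since the whole argument is a transport of structure along the two projections $\mathbf{Arr}[\C]\to\C$. The only step that is more than a mechanical repetition of a map in two slots is the verification that $\otimes$-images of arrow-category morphisms are again arrow-category morphisms, which is controlled entirely by the interchange law; the mild subtlety worth stating carefully is that the commuting square making $(\alpha_{A,C,E},\alpha_{B,D,F})$ a morphism is the naturality of $\alpha$ with respect to $f$, $g$, $h$ together, not a coherence identity.
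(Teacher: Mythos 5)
Your proposal is correct and takes essentially the same route as the paper: the structure is transported componentwise, the pairs $(\alpha_{A,C,E},\alpha_{B,D,F})$ and the unitor pairs are morphisms of $\mathbf{Arr}[\C]$ precisely by naturality of $\alpha$, $\lambda$, $\rho$ in $\C$, and the pentagon and triangle axioms hold because they hold in each component in $\C$ --- which is exactly what the paper's three-dimensional diagrams encode (front and back faces are the component coherence diagrams, the connecting faces are the naturality squares). Your write-up is in fact slightly more thorough than the paper's, since you also verify well-definedness of $\boxtimes$ on morphisms via the interchange law and the functoriality of $\boxtimes$, steps the paper leaves implicit.
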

\begin{proof}
We will show that the pentagon and the triangle axiom are satisfied. The pentagon axiom holds due to the following diagram, where the front and the back face commute because $\alpha$ satisfies the ordinary pentagon axiom. The two side faces commute due to the definition of the monoidal product and naturality of the associator, and the top and bottom faces commute due to naturality of the associator:
\[
\def\stimes{{\otimes}}
\hspace{-3cm}
\small
\begin{tikzcd}[ampersand replacement=\&]
\& A_{1} \stimes (A_{2} \stimes (A_{3} \stimes A_{4})) \arrow[swap]{dl}{f_{1} \stimes ( f_{2} \stimes (f_{3} \stimes f_{4}))}\arrow{r}{\alpha} \arrow[dashed]{dd}[yshift=0.75cm]{\mathrm{id}_{A_{1}} \stimes \alpha} \&  (A_{1}\stimes A_{2}) \stimes (A_{3} \stimes A_{4}) \arrow{dl}{(f_{1} \stimes f_{2}) \stimes (f_{3} \stimes f_{4})} \arrow{r}{\alpha}  \&  ((A_{1}\stimes A_{2}) \stimes A_{3}) \stimes A_{4} \arrow{dl}{((f_{1} \stimes f_{2}) \stimes f_{3} )\stimes f_{4}} \arrow{dd}{\alpha \stimes \mathrm{id}_{A_{4}}}
\\
B_{1}\stimes (B_{2} \stimes (B_{3} \stimes B_{4})) \arrow {r}{\alpha} \arrow{dd} {\mathrm{id}_{B_{1}} \stimes \alpha} \& |[]|(B_{1}\stimes B_{2}) \stimes (B_{3} \stimes B_{4})\arrow{r}{\alpha} \&   ((B_{1}\stimes B_{2}) \stimes B_{3}) \stimes B_{4} \arrow[]{dd}[yshift=1cm]{\alpha}\\
\& A_{1}\stimes ((A_{2} \stimes A_{3}) \stimes A_{4})\arrow[dashed]{dl}{f_{1} \stimes (f_{2}\stimes f_{3} )\stimes f_{4})} \arrow[dashed]{rr}[xshift=1cm]{\alpha} \& \& (A_{1}\stimes (A_{2} \stimes A_{3})) \stimes A_{4} \arrow{dl}{(f_{1} \stimes (f_{2} \stimes f_{3} ))\stimes f_{4}} \\
B_{1}\stimes ((B_{2} \stimes B_{3}) \stimes B_{4}) \arrow[swap]{rr}{\alpha} \& \& (B_{1}\stimes (B_{2} \stimes B_{3})) \stimes B_{4}\\
\end{tikzcd}
\hspace{-3cm}
\]
The triangle axiom for $\mathbf{Arr}[\C]$ is given by the following diagram:
 
\[\begin{tikzcd}[row sep=scriptsize, column sep=scriptsize]
(A\otimes \mathbb{I}) \otimes A' \arrow{dd}{(f\otimes \mathrm{id}_{\mathbb{I}}) \otimes f'} \arrow{dr}{\rho \otimes \mathrm{id}_{A'} } \arrow{rr}{\alpha} & &  A\otimes (\mathbb{I} \otimes A') \arrow{dd}{f\otimes (\mathrm{id}_{\mathbb{I}} \otimes f')} \arrow{dl}{\mathrm{id}_{A}\otimes \lambda } \\
&   A \otimes A \arrow{dd}[yshift = 0.5cm]{f\otimes f'}& \\
(B\otimes \mathbb{I}) \otimes B'\arrow{dr}{\rho \otimes \mathrm{id}_{B'} } \arrow[dashed]{rr}[xshift= 1cm]{\alpha} & & B\otimes (\mathbb{I} \otimes B') \arrow{dl}{\mathrm{id}_{B}\otimes \lambda} \\
 & B \otimes B'  & 
\end{tikzcd}
\hspace{-3cm}
\]
Here the top and the bottom faces commute due to the the triangle identity and the two side faces commute due to the definition of the monoidal product in $\mathbf{Arr}[\C]$ and due to naturality of the left and right unitors in $\mathcal{C}$. Finally, the back face commutes because of the naturality of the associator.
\end{proof}

\section{Proof for Lemma~\ref{lem. properties of quantum designs}} \label{app: proofs}
\begin{proof} 
Consider an $r$-uniform, $k$-regular and $\lambda$-balanced quantum design $D=\{ p_{1},...,p_{v} \}$ with $p_{i} \in \mathbb{C}^{b}$. By applying the trace function to the regularity condition, we get:
\begin{align}
    \mathrm{Tr}\Bigl(\sum_{i=0}^{v} p_i \Bigr) = \sum_{i=0}^{v} \mathrm{Tr}( p_i) = k \cdot \mathrm{Tr}(\mathbb{I}_{b \times b}).
\end{align}
Using the uniformity condition this expression becomes:
\begin{align}
    \sum_{i=0}^{v} \mathrm{Tr}( p_i) = \sum_{i=0}^{v} r = v \cdot r = k \cdot \mathrm{Tr}(\mathbb{I}_{b \times b}) = k \cdot b.
\end{align}
This proves Eq.~\ref{eq. 1 qdesign}.

In order to prove Eq.~\ref{eq. 2 qdesign}, we start with the following expression:
\begin{align}
    b =  \mathrm{Tr}(\mathbb{I}_{b \times b}) =  \mathrm{Tr}(\mathbb{I}_{b \times b}^{2}).
\end{align}
Using the regularity condition, we get:
\begin{align}
    \mathrm{Tr}(\mathbb{I}_{b \times b}^{2}) = \frac{1}{k^{2}}\mathrm{Tr} \Bigl(\sum_{i=0}^{v} p_i \sum_{j=0}^{v} p_i \Bigr) = \frac{1}{k^{2}} \sum_{i,j=0}^{v} \mathrm{Tr}(p_i p_j) = \frac{1}{k^{2}} \sum_{i,j=0, j\neq i}^{v} \lambda  + \frac{1}{k^{2}} \sum_{i}^{v} r = \frac{1}{k^{2}} (\lambda v(v-1) + vr)
\end{align}
Here we have used the uniformity and the $\lambda$-condition in the third step.
Hence we have:
\begin{align}
    b & =  \frac{1}{k^{2}} (\lambda v(v-1) + vr) \Leftrightarrow \\
    \frac{b \cdot k}{v} \cdot k & = \lambda (v-1) + r
\end{align}
Using Eq.~\ref{eq. 1 qdesign}, we obtain:
\begin{align}
    r \cdot k = \lambda (v-1) + r
\end{align}
This is equivalent to Eq.~\ref{eq. 2 qdesign}.
\end{proof}

\section{Structures in \textbf{RUDesign} and \textbf{QDesign}} \label{sec. structures in BDesign and QDesign}
In the following we will discuss some structures of the categories \textbf{RUDesign} and \textbf{QDesign}. 
\begin{theorem}\label{prop. RUDesign is monoidal category}
The category $\mathbf{RUDesign}[\mathbf{Mat}(\mathbb{N}), \mathbf{FSet}]$ is a monoidal category with monoidal product given by the Kronecker product between matrices. In particular, for objects $\chi:b \longrightarrow v$ and $\chi': b' \longrightarrow v'$ with parameters $k$ resp. $k'$ and $r$ resp. $r'$ we have that $\chi \otimes \chi': b\otimes b' \longrightarrow v \otimes v'$ is an object in $\mathbf{RUDesign}[\mathbf{Mat}(\mathbb{N}), \mathbf{FSet}]$ with parameters $k\cdot k'$ and $r\cdot r'$. 
\end{theorem}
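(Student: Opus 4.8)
The plan is to establish the theorem in two stages: first verify that $\mathbf{RUDesign}[\mathbf{Mat}(\mathbb{N}), \mathbf{FSet}]$ is closed under the proposed monoidal product inherited from $\mathbf{Arr}[\mathbf{Mat}(\mathbb{N})]$, and then invoke the general monoidal structure on arrow categories to supply the coherence data. For the closure part, I would take two objects $\chi: b \to v$ and $\chi': b' \to v'$ that are $k$-uniform, $r$-regular (resp.\ $k'$-uniform, $r'$-regular), and show that $\chi \otimes \chi'$ is $(k\cdot k')$-uniform and $(r\cdot r')$-regular. The key computation is that the Kronecker product interacts correctly with the pointed structure. Recall from Example~\ref{ex. pointed structures} that in $\mathbf{Mat}(\mathbb{N})$ the pointed morphism $p_n: 1 \to n$ is the all-ones column vector, and crucially $p_{n \otimes n'} = p_n \otimes p_{n'}$, since the all-ones vector of length $nn'$ is the Kronecker product of the all-ones vectors of lengths $n$ and $n'$. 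Using bifunctoriality of $\otimes$, the regularity condition for the product follows from
\begin{align*}
(\chi \otimes \chi') \circ (p_b \otimes p_{b'}) = (\chi \circ p_b) \otimes (\chi' \circ p_{b'}) = (r\, p_v) \otimes (r'\, p_{v'}) = (r r')\, (p_v \otimes p_{v'}),
\end{align*}
and symmetrically the uniformity condition follows by precomposing the daggered pointed structure, yielding the parameter $k k'$. I would present these two computations in the graphical calculus, since that makes the factorisation of the pointed structure across the monoidal product transparent.

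Second, I would argue that these closure facts are enough, because the ambient coherence data has already been constructed. By Proposition~\ref{prop. monoidal product in arrow cats}, $\mathbf{Arr}[\mathbf{Mat}(\mathbb{N})]$ is a monoidal category with product $\boxtimes$ given on objects by $f \boxtimes g = f \otimes g$ and on morphisms componentwise, and with associators and unitors inherited componentwise from those of $\mathbf{Mat}(\mathbb{N})$. Since $\mathbf{RUDesign}[\mathbf{Mat}(\mathbb{N}), \mathbf{FSet}]$ is by Definition~\ref{def. RUDesign construction} a subcategory of $\mathbf{Arr}[\mathbf{Mat}(\mathbb{N})]$, it suffices to check that it is closed under $\boxtimes$ on objects (done above), that it is closed under $\boxtimes$ on morphisms, and that the monoidal unit and the structural isomorphisms land inside the subcategory. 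Closure on morphisms holds because the morphisms of $\mathbf{RUDesign}[\mathbf{Mat}(\mathbb{N}), \mathbf{FSet}]$ are precisely pairs coming from $\mathbf{FSet}$ (functions), and the Kronecker product of two $0/1$ permutation-type matrices representing functions is again such a matrix; equivalently, $\mathbf{FSet} \hookrightarrow \mathbf{Mat}(\mathbb{N})$ is a monoidal functor, so its image is closed under $\otimes$.

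The remaining point is to check that the unit object of $\mathbf{Arr}[\mathbf{Mat}(\mathbb{N})]$ and the associators and unitors restrict to $\mathbf{RUDesign}$. The unit is $\mathrm{id}_1: 1 \to 1$, which is trivially $1$-uniform and $1$-regular, hence an object of the subcategory, and the structural isomorphisms $\alpha, \lambda, \rho$ have components that are identity-like bijections, hence functions, so they are morphisms of $\mathbf{RUDesign}$. Because all coherence diagrams (pentagon, triangle) already hold in $\mathbf{Arr}[\mathbf{Mat}(\mathbb{N})]$ and the subcategory inclusion is faithful, these diagrams hold a fortiori in $\mathbf{RUDesign}$, completing the proof. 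I expect the main obstacle to be precisely the parameter-tracking step, namely the clean factorisation $p_{n \otimes n'} = p_n \otimes p_{n'}$ of the pointed structure across the Kronecker product, since everything downstream reduces to bifunctoriality once that identity is in place; the coherence verification itself is inherited for free from Proposition~\ref{prop. monoidal product in arrow cats} and requires no new calculation.
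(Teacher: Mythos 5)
Your proposal is correct and takes essentially the same approach as the paper: the paper likewise obtains the monoidal structure by applying Proposition~\ref{prop. monoidal product in arrow cats} to get the product on $\mathbf{Arr}[\mathbf{Mat}(\mathbb{N})]$ and then checking that uniformity and regularity are preserved under the Kronecker product with parameters $k\cdot k'$ and $r\cdot r'$. Your version simply makes explicit details the paper leaves implicit, in particular the factorisation $p_{n\otimes n'} = p_n \otimes p_{n'}$ of the pointed structure and the closure of the morphisms, unit, and coherence isomorphisms under restriction to the subcategory.
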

\begin{proof}
Since $\mathbf{Mat}(\mathbb{N})$ is a monoidal category, we can apply Prop.~\ref{prop. monoidal product in arrow cats} from Appendix~\ref{app. structures in arrowcategories} to get a monoidal product on $\mathbf{Arr}[\mathbf{Mat}(\mathbb{N})]$. If we now restrict to matrices representing uniform and regular designs, every $\chi \otimes \chi': b\cdot b' \longrightarrow v \cdot v'$, where  $\chi:b \longrightarrow v$ and $\chi': b' \longrightarrow v'$ are objects in $\mathbf{Arr}[\mathbf{Mat}(\mathbb{N})]$ with parameters $k$ resp. $k'$ and $r$ resp. $r'$, fulfils the  uniformity and regularity conditions with parameters $k\cdot k'$ and $r\cdot r'$.
\end{proof}
Similarly, one can argue that the following can be derived from Prop.~\ref{prop. monoidal product in arrow cats}:
\begin{theorem}
    The category $\mathbf{Design}[\mathbf{Mat}(\mathbb{N}), \mathbf{FSet}]$ is a monoidal category, with monoidal product given by the Kronecker product between matrices.
\end{theorem}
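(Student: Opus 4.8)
The plan is to mirror the proof of Theorem~\ref{prop. RUDesign is monoidal category}, exploiting the monoidal product $\boxtimes$ on $\mathbf{Arr}[\mathbf{Mat}(\mathbb{N})]$ supplied by Proposition~\ref{prop. monoidal product in arrow cats}. Since $\mathbf{Mat}(\mathbb{N})$ is a monoidal category, that proposition equips $\mathbf{Arr}[\mathbf{Mat}(\mathbb{N})]$ with a monoidal structure acting on objects by $\chi \boxtimes \chi' := \chi \otimes \chi'$ and on morphisms by $(\phi_b, \phi_v) \boxtimes (\phi_{b'}, \phi_{v'}) := (\phi_b \otimes \phi_{b'},\, \phi_v \otimes \phi_{v'})$, both via the Kronecker product. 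The task then reduces to checking that $\mathbf{Design}[\mathbf{Mat}(\mathbb{N}), \mathbf{FSet}]$, viewed as a subcategory of $\mathbf{Arr}[\mathbf{Mat}(\mathbb{N})]$, is closed under $\boxtimes$ and inherits the coherence data.

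First I would observe that, unlike the $\mathbf{RUDesign}$ case, the objects of $\mathbf{Design}[\mathbf{Mat}(\mathbb{N}), \mathbf{FSet}]$ carry no uniformity or regularity constraint: they are \emph{all} arrows $\chi: b \to v$ of $\mathbf{Mat}(\mathbb{N})$. Closure on objects is therefore immediate, since the Kronecker product of two natural-number matrices is again such a matrix. The only substantive point is closure on morphisms. By Definition~\ref{def. Design construction}, a morphism of $\mathbf{Design}[\mathbf{Mat}(\mathbb{N}), \mathbf{FSet}]$ is a pair of arrows in the image of the faithful monoidal functor $F: \mathbf{FSet} \hookrightarrow \mathbf{Mat}(\mathbb{N})$, that is, a pair of function-matrices. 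I would then invoke monoidality of $F$: for functions $g, g'$ one has $F(g) \otimes F(g') = F(g \times g')$, so the Kronecker product of two function-matrices is again the matrix of a function. Hence $\boxtimes$ sends a pair of $\mathbf{Design}$-morphisms to a pair of function-matrices, which is again a $\mathbf{Design}$-morphism.

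Next I would confirm that the coherence isomorphisms restrict to the subcategory. The associators and unitors of $\mathbf{Arr}[\mathbf{Mat}(\mathbb{N})]$ constructed in Proposition~\ref{prop. monoidal product in arrow cats} are pairs of the corresponding isomorphisms of $\mathbf{Mat}(\mathbb{N})$; these are at worst permutation matrices representing bijections of finite sets, and hence lie in the image of $F$. In fact $\mathbf{Mat}(\mathbb{N})$ may be taken strict monoidal, since multiplication of numbers is strictly associative and unital and the Kronecker product is strictly associative under the standard lexicographic indexing, so the coherence data reduce to identity matrices $= F(\mathrm{id})$. Either way they are legitimate $\mathbf{Design}$-morphisms, and the pentagon and triangle axioms hold because they already hold in $\mathbf{Arr}[\mathbf{Mat}(\mathbb{N})]$ and are inherited verbatim by the subcategory.

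The main obstacle is the morphism-closure step: one must verify that the Kronecker product of two function-matrices is again a function-matrix. This is precisely the statement that $F$ is monoidal, which is part of the hypotheses of the design construction (Definition~\ref{def. Design construction}); spelling it out amounts to noting that the $v \times b$ matrix of a function $b \to v$ has exactly one $1$ in each column, and that this property is preserved under the Kronecker product, the product matrix being the incidence matrix of $g \times g'$.
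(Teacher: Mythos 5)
Your proposal is correct and follows essentially the same route as the paper: the paper likewise derives this theorem by applying Proposition~\ref{prop. monoidal product in arrow cats} to $\mathbf{Arr}[\mathbf{Mat}(\mathbb{N})]$ and restricting to the subcategory, exactly as in the proof of Theorem~\ref{prop. RUDesign is monoidal category}. Your write-up is in fact more complete than the paper's one-line argument, since you explicitly verify the one substantive point it leaves implicit --- that morphisms (pairs of function-matrices, i.e.\ matrices with exactly one $1$ per column) and the coherence isomorphisms are closed under the Kronecker product, via monoidality of the embedding $\mathbf{FSet} \hookrightarrow \mathbf{Mat}(\mathbb{N})$.
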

As the Kronecker product of two matrices not necessarily fulfils the $\lambda$-condition, the category $\mathbf{BDesign}$ is not monoidal in general.

\begin{satz}
There exists a functor $\widetilde{D}: \mathbf{RUDesign}[\mathbf{Mat}(\mathbb{N}), \mathbf{FSet}] \longrightarrow \mathbf{RUDesign}[\mathbf{Mat}(\mathbb{N}), \mathbf{FSet}]$ that maps each $k$-uniform and $r$-regular design $\chi$ to its dual $\chi^{T}$ which is a $r$-uniform and $k$-regular design and each pair of functions $(N_{V}, N_{B}): \chi \longrightarrow \chi'$ to its transpose $(N_{V}^{T}, N_{B}^{T}): \chi' \longrightarrow \chi$.
\end{satz}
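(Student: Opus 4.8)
The plan is to realise $\widetilde{D}$ as the restriction of the arrow-category lift of the transpose functor. First I would observe that matrix transposition is a contravariant, involutive, identity-on-objects functor $T: \mathbf{Mat}(\mathbb{N}) \to \mathbf{Mat}(\mathbb{N})$, since $(MN)^{T} = N^{T}M^{T}$ and $(M^{T})^{T}=M$; in fact $T$ is precisely the dagger of $\mathbf{Mat}(\mathbb{N})$. By the contravariant analogue of Proposition~\ref{prop. functor between arrow cat.} (noted in the remark following it), $T$ lifts to a contravariant functor $\widetilde{T}: \mathbf{Arr}[\mathbf{Mat}(\mathbb{N})] \to \mathbf{Arr}[\mathbf{Mat}(\mathbb{N})]$ sending an object $\chi: b \to v$ to $\chi^{T}: v \to b$ and a morphism $(N_{V},N_{B})$ to $(N_{V}^{T},N_{B}^{T})$ with the direction reversed; this is exactly the assignment $\chi' \to \chi$ of the statement once the objects are dualised. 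The lift supplies preservation of identities and composition for free, so the remaining task is to check that $\widetilde{T}$ genuinely restricts to the subcategory $\mathbf{RUDesign}[\mathbf{Mat}(\mathbb{N}), \mathbf{FSet}]$, both on objects and on morphisms, after which I would define $\widetilde{D}$ to be this restriction.

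For the object level I would verify that $\chi^{T}$ is again regular and uniform with the two parameters exchanged. Recall from Definition~\ref{def. RUDesign construction} that capping $\chi$ on the bottom with the pointed structure of Example~\ref{ex. pointed structures}(ii) expresses regularity, and here amounts to the row sums of $\chi$ all equalling $r$, while capping on the top with its dagger expresses uniformity and amounts to the column sums all equalling $k$. Transposition interchanges rows and columns, so the column sums of $\chi^{T}$ are $r$ and its row sums are $k$; hence $\chi^{T}$ is $r$-uniform and $k$-regular, as claimed. The commuting-square condition for the image morphism is then automatic: transposing $\chi' \circ N_{B} = N_{V}\circ\chi$ gives $N_{B}^{T}\circ \chi'^{T} = \chi^{T}\circ N_{V}^{T}$, which is exactly the square required of $(N_{V}^{T},N_{B}^{T})$ viewed as a morphism $\chi'^{T} \to \chi^{T}$ in $\mathbf{Arr}[\mathbf{Mat}(\mathbb{N})]$.

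The main obstacle is the morphism level: I must confirm that $(N_{V}^{T}, N_{B}^{T})$ is again a morphism of $\mathbf{RUDesign}[\mathbf{Mat}(\mathbb{N}), \mathbf{FSet}]$ and not merely of $\mathbf{Arr}[\mathbf{Mat}(\mathbb{N})]$; that is, each transposed component must again lie in the image of the faithful embedding $\mathbf{FSet}\hookrightarrow\mathbf{Mat}(\mathbb{N})$, i.e.\ be a $0/1$ matrix with exactly one $1$ per column. This is the delicate point, because the embedding sends a function to a matrix with one $1$ per column, whereas its transpose has one $1$ per row and so a priori represents the converse relation rather than a function. The crux of the proof is therefore a key lemma asserting that, for every morphism of $\mathbf{RUDesign}[\mathbf{Mat}(\mathbb{N}), \mathbf{FSet}]$, the transposed pair still lands in the image of the embedding. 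I would attack this lemma by exploiting the interaction between the commuting square $\chi' \circ N_{B} = N_{V}\circ\chi$ and the uniformity and regularity constraints imposed on the source and target designs, since it is precisely this rigid combinatorial structure that one hopes constrains the transposed components enough to keep them functional. Establishing this lemma is where essentially all of the genuine content of the proof resides, and it is the step I expect to demand by far the most care.
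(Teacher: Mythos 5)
Your route is the same as the paper's: define the transpose functor $D\colon\mathbf{Mat}(\mathbb{N})\to\mathbf{Mat}(\mathbb{N})$, lift it contravariantly to $\mathbf{Arr}[\mathbf{Mat}(\mathbb{N})]$ via (the contravariant remark following) Proposition~\ref{prop. functor between arrow cat.}, and restrict; your object-level check that row and column sums interchange, and the transposed commuting square, are both correct. The genuine gap is the ``key lemma'' you defer everything to: that lemma is \emph{false}, so the plan cannot be completed as you state it. Take $\chi=\mathbb{I}_{2\times 2}\colon 2\to 2$ and $\chi'=(1)\colon 1\to 1$, both $1$-uniform and $1$-regular, and let $N_{V}=N_{B}$ be the $1\times 2$ all-ones matrix, i.e.\ the image under $\mathbf{FSet}\hookrightarrow\mathbf{Mat}(\mathbb{N})$ of the unique function from a two-element set to a one-element set. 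Then $\chi'\circ N_{B}=N_{V}\circ\chi$ (both sides are the all-ones row), so $(N_{V},N_{B})\colon\chi\to\chi'$ is a legitimate morphism of $\mathbf{RUDesign}[\mathbf{Mat}(\mathbb{N}),\mathbf{FSet}]$; yet $N_{V}^{T}$ is the all-ones $2\times 1$ column, whose single column sums to $2$, hence is the image of no function. In general a $0/1$ matrix with exactly one $1$ per column transposes to another such matrix precisely when it is a permutation matrix, and the uniformity, regularity and commuting-square constraints do nothing to force bijectivity of the components — so no amount of the combinatorial rigidity you hope to exploit can rescue the lemma.

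You have, however, put your finger on a point that the paper's own proof silently elides: it concludes with ``if we now restrict to the subcategory\dots we get a functor,'' but restricting the \emph{domain} does not make the transposed morphism components functional, so as written the lifted functor lands only in $\mathbf{RUDesign}[\mathbf{Mat}(\mathbb{N})]$ (the identity-functor version, where morphisms are arbitrary pairs of $\mathbb{N}$-matrices making the square commute), not in the $\mathbf{FSet}$-constrained category. The statement can be repaired either by enlarging the codomain to $\mathbf{RUDesign}[\mathbf{Mat}(\mathbb{N})]$, or by restricting the domain to the subcategory whose morphism components are bijections, since permutation matrices are closed under transposition. Your write-up correctly isolates the obstruction — which is more than the paper's proof does — but it should end with one of these amendments rather than with an unprovable lemma.
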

\begin{proof}
We can define a functor $D: \mathbf{Mat}(\mathbb{N}) \longrightarrow \mathbf{Mat}(\mathbb{N})$ that sends each natural number to itself and each matrix to its transpose. By remark \ref{prop. functor between arrow cat.} this functor then gives rise to a contravariant functor $\Tilde{D}: \mathrm{Arr}[\mathbf{Mat}(\mathbb{N})] \longrightarrow \mathrm{Arr}[\mathbf{Mat}(\mathbb{N})]$ that maps each object, i.e. a matrix, to its transpose and each pair of morphisms, i. e. a pair of matrices, to its transpose. If we now restrict to the subcategory where each object represents a uniform and regular design and all morphisms are pairs of functions, we get a functor: $\widetilde{D}: \mathbf{RUDesign}[\mathbf{Mat}(\mathbb{N}), \mathbf{FSet}] \longrightarrow \mathbf{RUDesign}[\mathbf{Mat}(\mathbb{N}), \mathbf{FSet}]$.
\end{proof}

Just as in the classical case, $\mathbf{QDesign}$ is also equipped with some structure.
\begin{theorem}
The category $\mathbf{QDesign}_{\mathrm{RU}}$ is a monoidal category.
\end{theorem}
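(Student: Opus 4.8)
The plan is to mirror the proof of Theorem~\ref{prop. RUDesign is monoidal category} for the classical case. First I would observe that $\mathbf{CP}[\mathbf{FHilb}]$ is a monoidal dagger category, so Proposition~\ref{prop. monoidal product in arrow cats} applies and equips $\mathbf{Arr}[\mathbf{CP}[\mathbf{FHilb}]]$ with a monoidal product $\boxtimes$, given on objects by $f \boxtimes g := f \otimes g$ and on morphisms by $(p,q) \boxtimes (p',q') := (p\otimes p', q\otimes q')$, with unit object $\mathrm{id}_{\mathbb{I}}$. It then remains to show that the subcategory $\mathbf{QDesign}_{\mathrm{RU}} = \mathbf{RUDesign}[\mathbf{CP}[\mathbf{FHilb}]]$ is closed under this product, which suffices to make it monoidal since the associators, unitors and coherence conditions restrict directly from $\mathbf{Arr}[\mathbf{CP}[\mathbf{FHilb}]]$.

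Closure on morphisms is immediate: the tensor product of two completely positive maps is again completely positive, so $(p\otimes p', q\otimes q')$ is a valid pair of CP-maps and hence a morphism of $\mathbf{QDesign}_{\mathrm{RU}}$. For objects, I would show that if $f: A \to D$ is $k$-uniform and $r$-regular and $g: A' \to D'$ is $k'$-uniform and $r'$-regular, then $f \otimes g: A \otimes A' \to D \otimes D'$ is $(kk')$-uniform and $(rr')$-regular. The key ingredient is that the pointed structure of $\mathbf{CP}[\mathbf{FHilb}]$, namely the adjoint of the trace (Example~\ref{ex. pointed structures}), is multiplicative under the monoidal product, $p_{A \otimes A'} = p_A \otimes p_{A'}$, up to the coherence isomorphism $\mathbb{I} \otimes \mathbb{I} \cong \mathbb{I}$; this follows because the trace of a tensor-product algebra factors as the product of traces. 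Granting this, the regularity condition $f \circ p_A = r\, p_D$ and its primed counterpart combine via bifunctoriality of $\otimes$ to give
\[
(f \otimes g) \circ p_{A \otimes A'} = (f \circ p_A) \otimes (g \circ p_{A'}) = (r\, p_D) \otimes (r'\, p_{D'}) = rr'\, p_{D \otimes D'},
\]
so $f \otimes g$ is $(rr')$-regular, and the uniformity statement $p_D^\dagger \circ f = k\, p_A^\dagger$ yields $(kk')$-uniformity by the mirror computation obtained through daggering. Finally I would check that the unit $\mathrm{id}_{\mathbb{I}}$ is $1$-uniform and $1$-regular, so it lies in $\mathbf{QDesign}_{\mathrm{RU}}$.

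I expect the main obstacle to be the multiplicativity of the pointed structure: one must verify carefully, in the graphical calculus of $\mathbf{CP}[\mathbf{FHilb}]$, that the trace map and its adjoint respect the tensor product of dagger Frobenius structures, taking due care over the identification of the monoidal unit with $\mathbb{I} \otimes \mathbb{I}$. Once this lemma is established, the remaining verifications are the same formal bookkeeping as in the classical case treated in Theorem~\ref{prop. RUDesign is monoidal category}.
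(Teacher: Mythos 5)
Your proof is correct and takes essentially the same route as the paper: invoke Proposition~\ref{prop. monoidal product in arrow cats} to transfer the monoidal structure of $\mathbf{CP}[\mathbf{FHilb}]$ to $\mathbf{Arr}[\mathbf{CP}[\mathbf{FHilb}]]$, then check that the $k$-uniformity and $r$-regularity conditions are closed under the tensor product with parameters $k\cdot k'$ and $r\cdot r'$. In fact, you spell out precisely what the paper dismisses as ``straightforward to verify'' --- the multiplicativity of the pointed structure (adjoint of the trace), the interchange-law computation, and the check that the unit object is $1$-uniform and $1$-regular.
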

\begin{proof}
The category $\mathbf{CP}[\mathbf{FHilb}]$ is monoidal~\cite{jamie}. According to Prop.~\ref{prop. monoidal product in arrow cats} this gives rise to a monoidal product in $\mathbf{Arr[CP[FHilb]]}$.
If we now restrict to the case where the objects in $\mathbf{Arr[CP[FHilb]]}$ encode uniform and regular quantum designs and the morphisms are pairs of functions, i. e. to the category $\mathbf{QDesign}_{RU}$, it is straightforward to verify that the tensor product of two uniform and regular CP-maps with parameters $k, r$ and $k^{\prime}, r^{\prime}$ respectively, again fulfils the uniformity and regularity condition with parameters $k\cdot k'$ and $r\cdot r'$. 
\end{proof}
Similarly, one can argue that the following has to hold:
\begin{theorem}
    The category $\mathbf{QDesign}$ is a monoidal category.
\end{theorem}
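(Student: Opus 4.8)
The plan is to proceed exactly as in the proof that $\mathbf{QDesign}_{\mathrm{RU}}$ is monoidal, but without needing to verify any preservation of uniformity or regularity, which simplifies matters considerably. First I would recall that $\mathbf{QDesign}$ is by definition $\mathbf{Design}[\mathbf{CP}[\FHilb]]$, and that here the embedding functor is the identity, so no restriction is placed on the morphisms: $\mathbf{QDesign}$ is in fact the full arrow category $\mathbf{Arr}[\mathbf{CP}[\FHilb]]$, with objects CP-maps between $H^*$-algebras and morphisms commuting squares of CP-maps.

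Next I would invoke the fact that $\mathbf{CP}[\FHilb]$ is a monoidal category~\cite{jamie}. By Proposition~\ref{prop. monoidal product in arrow cats}, any monoidal structure on a category $\mathcal{C}$ lifts to a monoidal structure $\boxtimes$ on $\mathbf{Arr}[\mathcal{C}]$, given on objects by $f \boxtimes g := f \otimes g$ and on morphisms by $(p,q)\boxtimes(p',q') := (p\otimes p', q\otimes q')$, with associator and unitors induced componentwise from those of $\mathcal{C}$. Applying this with $\mathcal{C} = \mathbf{CP}[\FHilb]$ endows $\mathbf{Arr}[\mathbf{CP}[\FHilb]] = \mathbf{QDesign}$ with a monoidal product.

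The only point requiring a remark is that this product stays within $\mathbf{QDesign}$, i.e.\ that the construction is closed. On objects this is immediate, since the tensor product of two CP-maps is again a CP-map. On morphisms, the tensor product of two commuting squares of CP-maps is again a commuting square whose legs are tensor products of CP-maps, hence themselves CP-maps; functoriality of $\otimes$ in $\mathbf{CP}[\FHilb]$ guarantees that the resulting square still commutes. Since $F=\mathrm{id}$ imposes no further constraint on the admissible morphisms, there is nothing else to check.

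I expect no genuine obstacle here: the entire content is Proposition~\ref{prop. monoidal product in arrow cats} together with the monoidality of $\mathbf{CP}[\FHilb]$. Unlike the $\mathbf{QDesign}_{\mathrm{RU}}$ and $\mathbf{RUDesign}$ cases, one does not need to track how the parameters $k$ and $r$ behave under the monoidal product, so the proof reduces to a direct application of the lifting proposition.
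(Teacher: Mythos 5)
Your proposal is correct and follows essentially the same route as the paper: the paper's (implicit) argument is precisely that $\mathbf{CP}[\FHilb]$ is monoidal and that Proposition~\ref{prop. monoidal product in arrow cats} lifts this structure to $\mathbf{Arr}[\mathbf{CP}[\FHilb]]$, which equals $\mathbf{QDesign}$ since $F=\mathrm{id}$ imposes no restriction. Your explicit observation that no uniformity, regularity, or $\lambda$-balance conditions need to be checked is exactly why the paper dispatches this case with ``similarly, one can argue.''
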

The monoidal product of two CP-maps satisfying the $\lambda$-condition does not satisfy the $\lambda$-condition in general and hence one cannot define a monoidal product in $\mathbf{QDesign}_{B}$ in general.

\end{document}